\newcommand{\E}{\mathsf{E}}
\newcommand*\mcupinn[2]{\vcenter{\hbox{$\mathsurround=0pt
  \ifx\displaystyle#1\textstyle\else#1\fi\bigcup$}}}
\newcommand\rv[1]{{\color{black}#1}}
\newtheorem{theorem}{Theorem}
\newtheorem{remark}{Remark}
\newtheorem{proposition}{Proposition}
\newtheorem{assumption}{Assumption}
\newtheorem{property}{Property}
\setlist[itemize]{align=parleft,left=0pt..1em}
\begin{document}

\title{\bf Average Communication Rate for Networked  Event-Triggered Stochastic Control Systems}
\date{}
\author{Zengjie Zhang, 
Qingchen Liu$\textsuperscript{*}$,
Mohammad H. Mamduhi,
and Sandra Hirche

\thanks{\textsuperscript{*}Corresponding author.}

\thanks{Z. Zhang is with the Department of Electrical Engineering, Eindhoven University of Technology, Netherlands (z.zhang3@tue.nl).}
\thanks{Q. Liu is with the Department of Automation, University of Science and Technology of China, Hefei, China (qingchen\_liu@ustc.edu.cn).}
\thanks{M. H. Mamduhi is with the Automatic Control Laboratory, ETH Zürich, Switzerland (mmamduhi@ethz.ch).}
\thanks{S. Hirche is with the Chair of Information-Oriented Control, Technical University of Munich, Munich, Germany (hirche@tum.de).}
}

\maketitle

\begin{abstract}
Quantifying the average communication rate (ACR) of a networked event-triggered stochastic control system (NET-SCS) with deterministic thresholds is challenging due to the non-stationary nature of the system's stochastic processes.
For \rv{a NET-SCS, the nonlinear statistics propagation of the network communication status brought up by deterministic thresholds makes the precise computation of ACR difficult.} 
Previous work used to over-simplify the computation \rv{using a Gaussian distribution without incorporating this nonlinearity, leading to sacrificed precision}. This paper proposes both analytical and numerical approaches to predict the exact ACR for a NET-SCS using a recursive model. 
We use theoretical analysis and a numerical study to qualitatively evaluate the deviation gap of the conventional approach that ignores the side information. The accuracy of our proposed method, alongside its comparison with the simplified results of the conventional approach, is validated by experimental studies.
Our work is promising to benefit the efficient resource planning of networked control systems with limited communication resources by providing accurate ACR computation.
\end{abstract}

\section{Introduction}

\IEEEPARstart{I}{n} recent years, emerging large-scale control systems, such as collaborative manufacturing~\cite{cai2018zero}, smart power grids~\cite{mager2019feedback}, and autonomous traffic management~\cite{lv2020event}, tend to be designed in a distributed manner where \rv{the plants and the sensors} of the systems are deployed remotely and connected using a communication network. \rv{Due to the existence of stochastic noise that may potentially degrade the system performance~\cite{ling2002robust}, sufficient measurement sampling is required such that up-to-date measurements are available to facilitate the controller design~\cite{zhang2019networked}.} Nevertheless, the communication resources of the networked systems used to maintain sufficient sampling are often limited by the power restrictions of the system, such as the mobile and portable devices of which the power mainly relies on batteries. This issue suggests activating the network communication for sampling only when needed, motivating the studies on \textit{event-based schedulers} for system communication~\cite{1184824}.
The event-based schedulers present high efficiency and excellent flexibility in terms of the consumption of the communication resources~\cite{
ding2019survey, antunes2021decentralized} and have been widely used for networked systems~\cite{heemels2012introduction,han2015stochastic,mamduhi2017error}). 

\rv{A networked control system perturbed by stochastic noise and triggered by an event-based communication scheduler can be formulated as a networked event-triggered stochastic control system (NET-SCS), as} illustrated in Fig.~\ref{fig:close_loop}. \rv{Similar modeling methods are commonly seen in the literature on event-triggered systems~\cite{ling2002robust, wang2010event}.} The activation of the network communication enabling the measurement sampling is dominated by a time-asynchronous event-based \textit{scheduler}. This event does not explicitly depend on time but is associated with the variable of interest and predefined thresholds, activating the communication only when necessary. This scheme has been widely used in practical networked systems for efficient consumption of the communication resources~\cite{antunes2014rollout, demirel2016threshold, molin2014optimal, mamduhi_MTNS, liu2017event}.

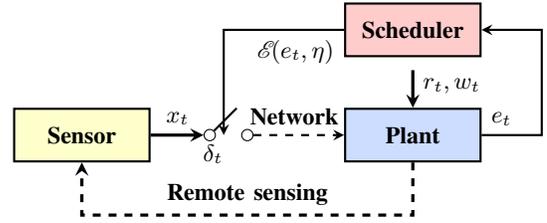
\begin{figure}[htbp]
\noindent
\hspace*{\fill} 
\begin{tikzpicture}[scale=1,font=\small]

\def\sbheight{0.8cm}
\def\sbwidth{1.6cm}
\def\dev{1.5cm}

\definecolor{closedloop}{RGB}{0, 0, 0}

\definecolor{shadowyellow}{RGB}{255, 255, 204}
\definecolor{shadowred}{RGB}{255, 204, 204}
\definecolor{shadowblue}{RGB}{204, 221, 255}

\node[minimum height=0.7cm,minimum width=1.8cm,draw,thick,fill=shadowyellow] (sen) at (-2cm,0cm) {\textbf{Sensor}};

\node[minimum height=0.7cm,minimum width=1.8cm,draw,thick,fill=shadowblue] (sys) at (2.4cm,0cm) {\textbf{Plant}};

\node[minimum height=0.7cm,minimum width=1.8cm,draw,thick,fill=shadowred] (trig) at (2.4cm,1.4cm) {\textbf{Scheduler}};

\draw[->,>=stealth, very thick, dashed] (sys.south) -- ([yshift=-0.7cm] sys.south) -- node[pos=0.5,align=center, anchor=south]{\textbf{Remote sensing}} ([yshift=-0.7cm] sen.south) -- (sen.south);

\node[circle,inner sep=0pt,minimum size=1.5mm,draw, color=closedloop] (swl) at (-0.3cm,0cm) {};
\draw[thick, color=closedloop] (swl.north east) -- ([xshift=0.3cm,yshift=0.3cm] swl.north east);
\node[circle,inner sep=0pt,minimum size=1.5mm,draw,color=closedloop] (swr) at ([xshift=0.5cm] swl) {};

\draw[->,>=stealth, very thick] (sen.east) -- node[pos=0.1,align=center, anchor=south west]{$x_t$} (swl.west);
\draw[->,>=stealth, thick, dashed,color=closedloop] (swr.east) -- node[pos=0.45,align=center, anchor=south]{\textbf{Network}} (sys.west);

\node[anchor=north east](swl.north east){$\delta_t$};

\draw[->,>=stealth,thick, color=closedloop] (sys.east) -- node[pos=0,align=center, anchor=south west]{$e_t$} ([xshift=0.8cm] sys.east) -- ([xshift=0.8cm] trig.east) -- (trig.east);

\draw[->,>=stealth,thick, color=closedloop] (trig.west) -- node[pos=0,align=center, anchor=north east]{$\mathscr{E}\!(e_t, \eta)$} ([xshift=-1.6cm] trig.west) -- ([xshift=-1.6cm, yshift=0cm] sys.west);

\draw[->,>=stealth, very thick] ([yshift=0.5cm] sys.north) -- node[pos=0.9,align=center, anchor=south west]{$r_t, w_t$} (sys.north);

\end{tikzpicture}
\hspace{\fill} 
\caption{\rv{A networked event-triggered stochastic control system (NET-SCS) consisting of a stochastic plant, a remote sensor, and an event-based scheduler, where $x_t$, $e_t$, $r_t$, and $w_t$ are the measurement, the variables of interest, control command, and stochastic noise of the system, respectively. The dashed dark-color arrow indicates remote sensing, such as visual perception, and the dashed light-color arrow denotes the networked communication between the plant and the sensor. The communication status of the network (\textit{active} or \textit{inactive}) is depicted by a binary variable $\delta_t$ and triggered by an event $\mathscr{E}\!\left( e_t, \eta \right)$ determined by a given threshold $\eta$.}}
\label{fig:close_loop}
\end{figure}

\rv{Allocating communication resources for a NET-SCS efficiently is a challenging topic, for which the performance of event-triggered network communication is important~\cite{wang2010event}. A} common index to \rv{evaluate this performance} is the average communication rate (ACR) that denotes the expectation of active communication at a certain time~\cite{ebner2016communication}, providing a practical insight into how many system resources are allocated to network communication and supports the efficient planning of system resources. Particular attention has been attracted to the \textit{stationary ACR}~\cite{molin2014price} which represents the limit of the ACR as the time approaches infinity. It is often used to evaluate the consumption of communication resources in the steady state of a networked system. Conventionally, the computation of the ACR is solved using statistical methods via numerical experiments, such as a Monte Carlo experiment. An analytical solution of the ACR with a closed form is difficult to obtain due to the \textit{side information} of event-based schedulers~\cite{8619752}. 

\rv{The \textit{side information} of a NET-SCS refers to the nontrivial relation between the triggering event and the variable of interest, usually caused by the nonlinearity of an event-based scheduler. 
For Fig.~\ref{fig:close_loop}, this means that the event $\mathscr{E}\!(e_t, \eta)$ activating the communication is nonlinear to the variable of interest $e_t$ due to deterministic thresholds. Thus, the statistics propagation of the communication status becomes nonlinear, making the ACR difficult to compute. Specifically, for deterministic thresholds-based schedulers, computing ACR involves tracking the propagation of iteratively truncated distributions, addressing a challenging problem that has not led to satisfactory results. Existing work} mainly focuses on the control and filtering of networked systems~\cite{wu2012event, wu2013can, shi2014event, ebner2016communication, han2015stochastic, demirel2017performance}, which does not require precise computation of ACR. 
In~\cite{wu2012event, ebner2016communication}, the ACR is computed based on a Gaussian distribution, without incorporating the \rv{side information}, leading to an accuracy gap with the true ACR. The work in~\cite{shi2014event} provides lower and upper bounds for the ACR without giving its analytical form. In~\cite{han2015stochastic, demirel2017performance}, the computation of ACR is simplified with a stochastic triggering threshold of which the main shortcomings are inferior control performance and sensitivity to data loss. These approaches either over-simplify the computation of the ACR by imposing impractical assumptions or conduct conservative and coarse approximations. They can hardly be used to predict exact ACR which is a fundamental step towards the study of the crucial filtering problem~\cite{shi2014event, leong2016sensor, xia2016networked, mohammadi2017event, zou2018recursive}.

\rv{This paper provides the first method to accurately compute the ACR of a NET-SCS with deterministic threshold-based triggering events. Inspired by~\cite{ebner2016communication}, we develop a recursive model to characterize the temporal evolution of ACR. Using this recursive model, we prove the existence of the stationary ACR and calculate its precise value. Moreover, we have provided analytical and numerical methods to compute ACR at an arbitrary time by precisely characterizing the statistics propagation of iteratively truncated distributions.} Theoretical analysis and experimental studies are provided to verify the accuracy of the proposed methods and qualify the inaccuracy gap of the conventional methods~\cite{wu2012event, wu2013can}. 
\rv{By providing a more precise characterization for the performance of network communication than existing methods, our work is promising to promote the performance of the existing work on computation resource allocation of networked systems~\cite{shi2014event, ebner2016communication, han2015stochastic, demirel2017performance}.} 
Our main contributions are summarized as follows.
\rv{\begin{enumerate}
    \item A novel recursive model to characterize the temporal evolution model for the ACR of a generic NET-SCS.
    \item Methods to compute the precise values of the stationary ACR and the ACR at any time.
    \item Qualitative analysis and numerical studies to address the accuracy gap of the conventional method.
\end{enumerate}}
The rest of this paper is organized as follows. Sec.~\ref{sec:prelim} formulates the problem, with mathematical preliminaries provided. In Sec.~\ref{sec:main_result_delay_free}, we introduce the recursive model of the ACR and investigate the existence of the stationary ACR. Sec.~\ref{sec:coeffi} presents the methods to compute the ACR. In Sec.~\ref{sec:compa}, we use theoretical analysis and a numerical example to verify the accuracy of the proposed method and evaluate the accuracy gap of the conventional method. In Sec.~\ref{sec:simulation}, a numerical experiment on a simple vehicle-following case is conducted to validate our methods. Finally, Sec.~\ref{sec:conclusions} concludes the paper.

\textit{Notation}: The rest of this article obeys the following notations. The sets of real and natural numbers are denoted by $\mathbb{R}$ and $\mathbb{N}$. The superscript $^+$ sued after them indicates the subsets only containing the positive elements. A Gaussian distribution with mean value $\mu \in \mathbb{R}$ and variance $\sigma^2$, $\sigma \in \mathbb{R}^+$ is represented by $\mathcal{N}(\mu,\sigma^2)$. For a stochastic event $\mathscr{E}$ defined on a probability space, $P(\mathscr{E})$ denotes the occurring probability of $\mathscr{E}$. For a stochastic variable $z \!\in \mathbb{R}$, $P(z)$, $p_z(\cdot)$, $F_z(\cdot)$, $\E(z)$, and $\mathrm{Var}(z)$ denote, respectively, its probability, (probability density function) PDF, cumulative distribution function (CDF), expectation, and variance.

\section{Problem Statement and Preliminaries} \label{sec:prelim}

In this section, we describe the main mathematical problem of this paper based on preliminaries. 

\subsection{Dynamic Model of NET-SCS and State Estimation Error}\label{sec:tdm}

A NET-SCS comprises a series of \textit{plants} and \textit{sensors} connected through a communication network. Without affecting the results in this paper, we only investigate one single plant and its associated sensor and scheduler, as shown in Fig.~\ref{fig:block_diagram}. We also assume that the dynamic model of the plant has the following linear scalar stochastic difference equation (SDE),
\begin{equation}
    x_{k+1} = Ax_{k} + Bu_{k} + w_{k}, 
    \label{eq:plant}
\end{equation}
where $k\in \mathbb{N}$ denotes the discrete sampling time of the system, $x_k, u_k\in\mathbb{R}$ are, respectively, the state and control input of the system at time $k$, $A,B\in\mathbb{R}$ are constant parameters, and $w_k\!\in\mathbb{R}$ is the stochastic noise of the system. For simplicity, we assume that the initial state of the system $x_0$ is a known deterministic variable. Note that the stochastic process $w_k$, $k \in \mathbb{N}$, is subject to the following assumption.
\begin{assumption}\label{as:as1}
The Gaussian stochastic process $w_k$ is independent and identically distributed (i.i.d.) for all $k \in \mathbb{N}$, i.e.,
\begin{enumerate}[leftmargin=*, wide, labelwidth=0pt]
    \item $w_k\sim\mathcal{N}\!\left(0,\sigma^2\right)$, $\forall \, k \in \mathbb{N}$, with $\sigma \in \mathbb{R}^+$. 
    \item $p_{w,w}(w_i,w_j) \!=\! p_w(w_i)p_w(w_j)$ holds for all $i,j \!\in\! \mathbb{N}$, $i \!\neq\! j$, where $p_w(\cdot)$ is the PDF of stochastic variable $w_k$, $k \in \mathbb{N}$, and $p_{w,w}(\cdot,\cdot)$ is the joint PDF of $w_i$ and $w_j$, $i,j \in \mathbb{N}$.
\end{enumerate}
\end{assumption}

\begin{figure}[htbp]
\noindent
\hspace*{\fill} 
\begin{tikzpicture}[scale=1,font=\scriptsize]

\def\sbheight{0.6cm}
\def\sbwidth{1.6cm}
\def\dev{1.5cm}

\definecolor{closedloop}{RGB}{0, 0, 0}

\definecolor{darkgray}{RGB}{220, 220, 220}
\definecolor{shadowgray}{RGB}{245, 245, 245}

\definecolor{shadowyellow}{RGB}{255, 255, 204}
\definecolor{shadowred}{RGB}{255, 204, 204}
\definecolor{shadowblue}{RGB}{204, 221, 255}
\definecolor{darkblue}{RGB}{31, 31, 122}
\definecolor{shadowori}{RGB}{255, 255, 229}


\node[minimum height=3.9cm,minimum width=5.5cm,draw,thick,densely dotted,fill=shadowblue] (bigblock2) at (-4.55cm,-0.1cm) {};

\node[circle,inner sep=0pt,minimum size=0.4cm,draw,fill=darkgray] (plus1) at (-7cm,0) {};
\draw[thick] (plus1.north east) -- (plus1.south west);
\draw[thick] (plus1.north west) -- (plus1.south east);

\node[minimum height=\sbheight,draw,color=closedloop,text width=\sbwidth,align=center,fill=white,rounded corners=2mm] (se) at (-5.3cm, -\dev) {\color{closedloop} State Estimator};
\node[minimum height=\sbheight,draw,text width=0.8*\sbwidth,align=center,fill=white,rounded corners=2mm] (controller) at (-5.5cm, 0) {Controller};

\node[minimum height=\sbheight,draw,text width=\sbwidth,align=center,fill=white,rounded corners=2mm] (plant) at (-3.0cm, 0) {Plant Dynamics};

\node[circle,inner sep=0pt,minimum size=0.4cm,draw,color=closedloop,thick] (plus2) at (-4cm,\dev) {};
\draw[color=closedloop,thick] (plus2.north east) -- (plus2.south west);
\draw[color=closedloop,thick] (plus2.north west) -- (plus2.south east);

\draw[->,>=stealth,thick,color=closedloop] ([yshift=0.4cm] plant.north) -- node[pos=0.7, align=left, anchor=west]{$w_t$} (plant.north);

\node[circle,inner sep=0pt,minimum size=1.5mm,draw, color=closedloop] (swl) at (0cm,-0.5cm) {};
\draw[thick, color=closedloop] (swl.south east) -- ([xshift=0.3cm,yshift=-0.3cm] swl.south east);
\node[circle,inner sep=0pt,minimum size=1.5mm,draw, color=closedloop] (swr) at ([yshift=-0.5cm] swl) {};

\node[minimum height=\sbheight,draw,color=closedloop,text width=0.8*\sbwidth,align=center,fill=shadowred, rounded corners=2mm] (robot) at (-0.7cm, \dev) {\textbf{Scheduler}};

\node[minimum height=\sbheight,draw,color=closedloop,text width=0.45*\sbwidth,align=center,fill=shadowyellow, rounded corners=2mm] (sen) at (-0.7cm, 0cm) {\textbf{Sensor}};

\node[minimum height=\sbheight,draw,color=closedloop,text width=0.8*\sbwidth,align=center,fill=white, rounded corners=2mm] (mem) at (-2.8cm, -\dev) {\color{closedloop} Memory};

\draw[->,>=stealth,thick] ([xshift=-0.6cm] plus1.west) -- node[pos=0.8,align=left, anchor=south east]{$r_k$} (plus1.west);

\draw[->,>=stealth,thick] (plus1.east) -- (controller.west);

\draw[->,>=stealth,thick] (controller.east) -- node[pos=0,align=left, anchor=south west]{$u_k$} (plant.west);

\draw[->,>=stealth,color=closedloop,thick] (plus2.east) -- node[pos=0.82,align=right, anchor=south west]{$e_k$} (robot.west);

\draw[->,>=stealth,thick] (se.west) -- ([xshift=-0.1cm] se.west) -- ([xshift=-1.7cm] se.center) --node[pos=1,anchor=north east]{$-$} (plus1.south);

\draw[->,>=stealth,thick,densely dashed,color=closedloop] (swr.south) --  ([yshift=-\dev+1cm] swr.center) -- node[pos=0.3,align=center, anchor=north]{\textbf{Network}} (mem.east);

\draw[->,>=stealth,thick,color=closedloop] (mem.west) -- node[align=left, anchor=south]{$\mathcal{I}_{k-1:\iota}$} (se.east);

\node[align=left, anchor=north west] at (bigblock2.north west) {\textbf{Plant}};

\draw[thick] (plant.east) -- ([xshift=0.15cm] plant.east);

\draw[->,>=stealth,thick,dashed] (plant.east) -- node[pos=0.5,align=center, anchor=north,text width=1.2cm]{\textbf{Remote} \\ \textbf{sensing}}(sen.west);

\draw[->,>=stealth,thick] (sen.east) -- ([yshift=0.5cm] swl.center) -- node[pos=1,align=right, anchor=north east]{$\delta_k$} (swl.north);

\node[anchor=south west] at (sen.east){$x_k$};

\draw[->,>=stealth,thick] ([xshift=0.15cm] plant.east) --  ([xshift=0.15cm, yshift=0.8cm] plant.east) -- 
([yshift=-\dev+0.8cm] plus2.center) --  (plus2.south);
\node[circle,inner sep=0pt,minimum size=1mm,draw,fill=black] at ([xshift=0.15cm] plant.east) {};

\node[align=right, anchor=south east,color=closedloop] at ([xshift=-1mm] plus2.center) {$-$};

\draw[->,>=stealth,color=closedloop,thick] (robot.east) -- node[pos=0,align=center, anchor=south west]{$\mathscr{E}_k$} ([xshift=0.5cm,yshift=\dev+0.5cm] swl.center) -- ([xshift=0.5cm,yshift=-0.25cm] swl.center) -- ([xshift=0.2cm,yshift=-0.25cm] swl.center);

\draw[->,>=stealth,thick] ([xshift=1.1cm] controller.center) -- ([xshift=1.1cm,yshift=-0.8cm] controller.center) -- ([yshift=0.4cm] mem.north) --(mem.north);

\node[circle,inner sep=0pt,minimum size=1mm,draw,fill=black] at ([xshift=1.1cm] controller.center) {};

\draw[->,>=stealth,thick,color=closedloop] ([xshift=-0.3cm] se.west) -- node[pos=0,align=center, anchor=north]{$\hat{x}_k$} ([xshift=-0.3cm, yshift=3cm] se.west) --  (plus2.west);
\node[circle,inner sep=0pt,minimum size=1mm,draw,fill=closedloop] at ([xshift=-0.3cm] se.west) {};

\end{tikzpicture}
\hspace{\fill} 
\caption{The block diagram of a NET-SCS.}
\label{fig:block_diagram}
\end{figure}
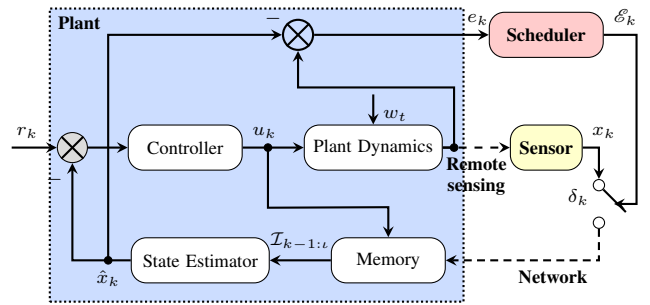

\begin{remark}
Our work in this paper only deals with a scalar NET-SCS. Note that the exact computation of the ACR for a multi-dimensional NET-SCS is even more challenging due to the probabilistic coupling among the individual dimensions of the non-Gaussian system variables. Also, various triggering options for multi-dimensional system variables make it tedious to determine their analytical distribution functions. The study on a scalar NET-SCS is sufficient to draw essential quantitative and qualitative conclusions that can be extended to multi-dimensional systems with additional efforts in future work.
\end{remark}

For system \eqref{eq:plant}, its state $x_k$ at any time $k \in \mathbb{N}^+$ can be measured by one or multiple sensors. However, the measurement is only sampled when the network communication is active, indicated by a closed switch in Fig.~\ref{fig:block_diagram}. At the time $k$, whether the status of the communication is active or not is represented as a binary event $\delta_k\in \{0,1\}$, namely $\delta_k = 1$ for active and $\delta_k = 0$ for inactive. For brevity, we represent these conditions as $\delta_k^{\{1\}}$ and $\delta_k^{\{0\}}$. The communication status $\delta_k$ is affected by an event $\mathscr{E}_k$ which is produced by an event-based \textit{scheduler} which will be introduced in Sec \ref{sec:trigger}. When the system state is not sampled, an estimated value $\hat{x}_k$ is provided by a \textit{state estimator} utilizing the system model \eqref{eq:plant} and the history data $\mathcal{I}_{k-1:\iota}= \left\{ \delta_{k-1},\cdots,\delta_{\iota+1}, \delta_{\iota},x_{\iota}, u_{k-1},\cdots,u_{\iota} \right\}$~\cite{mamduhi2017error}, 
\begin{equation}\label{eq:est}
\begin{split}
\hat{x}_{\iota} = &\, x_{\iota} \\
\hat{x}_{\iota+1} = & \, A x_{\iota} + B u_{\iota}, \\
\cdots &\,  \\
\hat{x}_{i} =& \, A \hat{x}_{i-1} + B u_{i-1},
\end{split}
\end{equation}
$i = \iota+2, \cdots, k$, where $\iota \in \mathbb{N}$ refers to the last instant of state sampling and equation $\hat{x}_{\iota} =  x_{\iota}$ indicates a state sampling operation. Since the initial state $x_0$ is deterministic and known, we have $\hat{x}_0 = x_0$ and $e_0=0$. 
All the system data before the last instant, i.e., all $\{\delta_i, x_i, u_i\}$ where $i < \iota$, are timely abandoned to limit the scale of the \textit{memory} used to store the data. We make a correspondence between the NET-SCS in Fig.~\ref{fig:block_diagram} and the general event-based networked system in Fig.~\ref{fig:close_loop} by recognizing $x_k$ as the system state and the estimation error $e_k = x_k - \hat{x}_k$ as the system variable of interest.

\rv{
\begin{remark}
We assume the control input $u_k$ in Eq.~\eqref{eq:plant} designed as {\color{black}a \textit{feedback} controller $u_k =u(\hat{x}_k-r_k)$ \rv{based on the state estimation $\hat{x}_k$} to achieve the desired closed-loop performance.} Incorporating the dynamics of state measurements in Eq.~\eqref{eq:est}, the control input $u_k$ at any time $k$ is deterministic and does not influence the analysis of ACR, as shown later. However, it is still necessary to include $u_k$ in Eq.~\eqref{eq:plant} to imply how a NET-SCS is controlled, although {\color{black}the performance of the feedback controller is not within the scope of this paper. } 
\end{remark}
}

By subtracting (\ref{eq:est}) from (\ref{eq:plant}), we obtain the dynamic model of the state estimation error as
\begin{equation}\label{eq:open}
\begin{split}
e_{\iota} = &~ 0,\\
e_{\iota+1} =&~ w_{\iota}, \\
\cdots & \\
e_{i} =& ~ A e_{i-1} + w_{i-1}, 
\end{split}
\end{equation}
where $i = \iota+2, \cdots, k$. Therefore, for all $i = \iota+1, \cdots, k$, $e_i$ is a random variable. Substituting $e_{k-1}$, $e_{k-2}$, $\cdots$, $e_{\iota}$ to $e_k$ recursively, we obtain
\begin{equation}\label{eq:open1}
\textstyle e_{k} = \sum_{i=\iota}^{k-1} A^{k-i-1}w_i,~k>\iota.
\end{equation}

\rv{
\begin{remark}
Eq.~\eqref{eq:open1} is based on assuming a deterministic and known initial state $x_0$. If $x_0$ is random and unknown, one can take its expected value as its initial estimation, i.e., $\hat{x}_0 \!=\! \E(x_0)$. This leads to a drifted estimation error $\textstyle e_{k} \!=\! \sum_{i=\iota}^{k-1} A^{k-i-1}w_i \!+\! A(x_0 \!-\! \E(x_0))$ for any $k$ between $0$ and the first measurement sampling time. However, after the first measurement sampling, the estimation error $e_k$ is the same as Eq.~\eqref{eq:open1} for all $k\in \mathbb{N^+}$. Analyzing the drifted estimation error particularly before the first measurement sampling is not challenging but tedious. Therefore, we only assume known and deterministic initial states without losing generality. 
\end{remark}
}

Indicated by Eq.~\eqref{eq:open1}, the estimation error accumulates over time $\{\iota+1, \iota+2, \cdots, k\}$ due to inactive communication. This may lead to possible performance degradation of the system performance. Since the accumulated error $e_k$ is a linear combination of Gaussian variables, it is subject to a Gaussian distribution $\mathcal{N}\!\left(0, \rv{\sigma^2}\sum_{i=\iota}^{k-1}\! A^{2(k-i-1)} \right)$. However, when network communication is triggered by deterministic threshold-based events, $e_k$ does not follow a Gaussian distribution. We will discuss this case in the following subsection.


\subsection{Event-Triggered Scheduler with Side Information}\label{sec:trigger}

\rv{On the one hand, network communication should be activated as infrequently as possible to reduce the consumption of communication resources. On the other hand, the duration of deactivated communication should not be arbitrarily long. It is necessary to activate the communication when} the last state estimation error $e_{k-1}$ exceeds a predefined threshold $\eta \in \mathbb{R}^+$ \rv{to avoid} the accumulation of the state estimation error $e_k$. \rv{Additionally, to improve the robustness of the system to faults and anomalies,} the communication should be activated when the consecutive inactive period is beyond a time limit $T \in \mathbb{N}^+$, \rv{even when the error $e_k$ does not exceed the thresholds. Time limits for inactive communication periods have been set in existing work~\cite{heemels2010networked, dolk2016event}.
These considerations render the following event-triggered scheduler,}
\begin{equation}\label{eq:schedul_variables_policy}
   \delta_k=\begin{cases} 1, & \qquad \text{if} \quad |e_{k-1}|\geq \eta ~\text{or}~ k-\iota > T  \\ 0, & \qquad \text{otherwise}, \end{cases}
\end{equation}
for any $k\in \mathbb{N}^+$ and $\iota \in \mathbb{N}$, $\iota < k$, where the condition that triggers active communication $\delta_k^{\{1\}}$ refers to a positive event $\mathscr{E}_k$, otherwise a negative event $\overline{\mathscr{E}}_k$. The scheduling scheme \eqref{eq:schedul_variables_policy} maintains a decent system performance by ensuring low resource usage by limiting the frequency of communication while restricting the estimation errors. Such an event-based scheduling model is widely used in various networked control systems, such as platooning of a group of vehicles \cite{dolk2017event, liu2017joint}, power systems \cite{dong2016event, saxena2020event} and cooperative manipulation in robotics systems \cite{dohmann2020distributed, ngo2021event}. 

The deterministic threshold-based event-triggering scheduling scheme in Eq.~\eqref{eq:schedul_variables_policy} brings saturation nonlinearity to the statistical propagation $\delta_k$ as $k$ increases. Specifically, the error accumulation depicted in~\eqref{eq:open} is only valid when $|e_i| \leq \eta$, for any $i = \iota, \iota+1,\cdots,k-1$. Otherwise, any $|e_i|>\eta$ will immediately activate the state sampling and lead to $\delta_{i+1} = 1$ and $e_{i+1} = 0$.
Thus, given the last communication instant $\iota$ and the history data $\mathcal{I}_{k-1:\iota}$, the estimation error under the event-triggered scheduling scheme (\ref{eq:schedul_variables_policy}) becomes
\begin{equation}\label{eq:close}
\begin{split}
\hat{e}_{\iota} =&~ 0, \\
\hat{e}_{\iota+1} =&~ w_{\iota}, \\
\cdots & \\
\hat{e}_{i} =& \left\{ \begin{array}{ll}
A \hat{e}_{i-1} + w_{i-1}, ~&\mathrm{if} \, |\hat{e}_{i-1}|< \eta, \\
0,~& \mathrm{else},
\end{array} \right. 
\end{split}
\end{equation}
where $i=\iota+2, \cdots, k$, for all $k \leq \iota+T$. Here, we use a new symbol $\hat{e}_i$ to represent the estimation error with deterministic threshold-based triggering Eq.~\eqref{eq:schedul_variables_policy}, distinguished from the original estimation error $e_i$ in \eqref{eq:open} without event-based triggering. Similar to $e_i$, $\hat{e}_i$ is also a random variable, for $i = \iota+1, \iota+2, \cdots, k$. Differently, $\hat{e}_i$ assigned in \eqref{eq:close} incorporates \textit{side information} $|\hat{e}_{i-1}|< \eta$ compared to $e_i$ \eqref{eq:open}. This side information imposes a truncation operation on the probabilistic distribution of the estimation error at every sampling instant, leading to nonlinearity in its statistical propagation. As a result, the estimation error is hardly subject to a Gaussian distribution as time increases, even though the noise is a stationary Gaussian process according to Assumption \ref{as:as1}. Also, it is difficult to bring up a brief overall analytical form to represent $\hat{e}_k$, similar to \eqref{eq:open1}, which makes it difficult to track the nonlinear statistical propagation. Here, we refer to the nonlinearity brought to the statistical propagation of the system variable of interest as the \textit{side information}.

\subsection{Stationary and Transient ACR of a NET-SCS}  \label{sec:tacr}
For the NET-SCS in (\ref{eq:plant}) with the state estimator (\ref{eq:est}) and the event-triggered scheduler (\ref{eq:schedul_variables_policy}), the ACR is defined as
\begin{equation}
\E\!\left(\delta_k\right) = P(\delta_k^{\{1\}} ),~k \in \mathbb{N},
\label{eq:expectation_delta_k}
\end{equation}
which describes the likelihood of activating the communication or the probability of the state sampling. Since the initial state $x_0$ is deterministic and known, we have $\E(\delta_0) $$=$$ P(\delta_0^{\{1\}} ) $$= 1$. Also, given $\hat{e}_0 = x_0 - \hat{x}_0 =0$, we know $\E(\delta_1) $$= $$P(\delta_1^{\{1\}}) $$=$$ 0$. For any $k\in \mathbb{N}^+$, $k \geq 2$, however, the value of $\delta_k$ is usually random, and the ACR is computed as
\begin{equation}
\textstyle \E\!\left(\delta_k\right) = 1- P(\delta_k^{\{0\}} ) = 1 - \int_{-\eta}^{\eta} p_{\hat{e}_{k-1}}(z) \mathrm{d} z,
\label{eq:expectation_delta_k_2}
\end{equation}
where $p_{\hat{e}_{k}}(\cdot)$ is the PDF of the state estimation error and $z \in \mathbb{R}$ is an auxiliary variable. Note that the integration interval $[\,-\eta, \eta\,]$ corresponds to the constraint $|\hat{e}_{k-1}| < \eta$ in (\ref{eq:schedul_variables_policy}) which blocks state sampling at time $k$. 

\rv{
\begin{remark}\label{eq:ACRpar}
From Eq.~\eqref{eq:expectation_delta_k_2}, it can be inferred that the value of ACR $\E(\delta_k)$ is affected by the threshold $\eta$. Specifically, a smaller $\eta$ may lead to a larger $\E(\delta_k)$. Additionally, $\E(\delta_k)$ may also increase with a larger system parameter $A$ in Eq.~\eqref{eq:plant} since the estimation error $e_k$ may have a bigger variance, making it easier to exceed the threshold $\eta$.
\end{remark}
}

\begin{remark}
The above statements are based on our assumption that the initial system state $x_0$ is known. Otherwise, the values of $E(\delta_0)$ and $E(\delta_1)$ are neither $1$ nor $0$. Instead, they are dependent on the distribution of $x_0$ and should be valued within the interval $(0,\,1)$.
\end{remark}

The limit $\E(\delta_{\infty}) = \displaystyle \lim_{k \rightarrow \infty} \E(\delta_k)$, if it exists, is defined as the \textit{stationary ACR}, while $\E(\delta_k)$ for a finite $k \in \mathbb{N}$ is referred to as the transient ACR. The main issue of exactly calculating the stationary and the transient ACR is that the analytical form of the PDF $p_{\hat{e}_k}(\cdot)$, $k\in\mathbb{N}$, is difficult to track due to the challenge of capturing the nontrivial statistical propagation of the estimation errors, as introduced in Sec.~\ref{sec:trigger}. 
In some existing work~\cite{ebner2016communication, demirel2017performance}, $p_{\hat{e}_k}(\cdot)$ is approximated by a Gaussian distribution without incorporating the side information, which leads to the approximated calculation of ACR. This article shows that such an approximation results in a larger ACR than the ground truth. Alongside this, accurate ACR computation methods are also provided.



\section{Communication rate analysis}\label{sec:main_result_delay_free}

This section analyzes the transient and the stationary ACR of a NET-SCS. A recursive model for the transient ACR is derived to prove the existence of the stationary ACR. This model allows calculating the transient and 
stationary ACR values using a finite number of coefficients.

\subsection{The Predictive Indexes and The Predictive Coefficients}\label{sec:rec_pic}

In this section, we introduce the predictive indexes and the predictive coefficients which are important to analyze the ACR. We first define a compound event for $k,n\in \mathbb{N}^+$, $n \leq k$,
\begin{equation*}
\mathscr{E}_{k:k-n} = \delta_{k}^{\{0\}} \cap \delta_{k-1}^{\{0\}} \cap \cdots \cap \delta_{k-n+1}^{\{0\}} \cap \delta_{k-n}^{\{1\}},
\end{equation*}
which represents the conjunction of $n$ successive inactive events after an active event $\delta_{k-n}^{\{1\}}$.
It is straightforward to show that the compound event satisfies the following property.
\begin{property}\label{py:prob}
For any $n,k  \in \mathbb{N}^+$, $n \leq k$, event $\mathscr{E}_{k:k-n}$ satisfies the following conditions.
\begin{enumerate}
\item \rv{$P(\mathscr{E}_{k:k-n}) > 0$}, $\forall \, n \leq T$, and \rv{$P(\mathscr{E}_{k:k-n}) = 0$}, $\forall \, n >T$.
\item \rv{$P(\mathscr{E}_{k:k-i} \cap \mathscr{E}_{k:k-j}) = 0$}, for any $i,j \leq k$, $i \neq j$.
\item $\bigcup_{n=1}^k \mathscr{E}_{k:k-n} = \delta_{k}^{\{0\}}$. 
\end{enumerate}
\end{property}
In Property \ref{py:prob}, condition 1) is met considering that the non-communication period of the system should not be larger than the limit $T$, according to the event-triggered scheduler~\eqref{eq:schedul_variables_policy}. Condition 2) is verified by the mutual exclusion between the compound events. Condition 3) is justified by taking the union of all compound events $\mathscr{E}_{k:k-n}$, for all $n=1,2,\cdots,k$. 

\subsubsection{Predictive indexes}
For particular communication variables $\delta_0$, $\delta_1$, $\cdots$, $\delta_n$, we define $n$-\textit{step predictive non-communication index} $P_n$, or \textit{predictive index} as
\begin{equation}\label{eq:n_index}
P_n = P\!\left( \left. \delta_{n}^{\{0\}}, \delta_{n-1}^{\{0\}}, \cdots, \delta_{1}^{\{0\}} \right| \delta_{0}^{\{1\}} \right),~n \in \mathbb{N}^+,
\end{equation}
which denotes the probability that no communication is activated for $n$ sampling instants given an active communication event $\delta_0^{\{1\}}$. We will later generalize the predictive index to arbitrary-time communication status $\delta_k$, $k \in \mathbb{N}^+$. The $n$-step predictive index $P_n$ satisfies the following property.

\begin{property}\label{pr:sum}
For any $n, T \in \mathbb{N}^+$, the predictive index $P_{n}$ satisfies the following conditions.
\begin{enumerate}
\item For all $n > T$, $P_{n} =0$.
\label{pp:en:1}
\item For all $n \leq T$, $0< P_{n} < 1$.
\label{pp:en:2}
\end{enumerate}
\end{property}
Property \ref{pr:sum}-\ref{pp:en:1}) is justified by that the communication is activated by force after $n>T$, according to \eqref{eq:schedul_variables_policy}. For $n \leq T$, neither activation nor deactivation of the communication is a certain event, since the support of the PDF of the noise $w_k$ is infinite, $\forall \, k = 1,2, \cdots,n$. This addresses property \ref{pr:sum}-\ref{pp:en:2}). 

\subsubsection{Predictive coefficients}
Also for variables $\delta_0$, $\delta_1$, $\cdots$, $\delta_n$, the $n$-\textit{stacked predictive non-communication coefficient} $\overline{P}_n$, or \textit{predictive coefficient} is defined as
\begin{equation}\label{eq:n_coef}
\overline{P}_n = P \!\left( \left. \delta_n^{\{0\}} \right| \mathscr{E}_{n-1:0} \right),~n=1,2,\cdots, T,
\end{equation}
which denotes the probability of a single non-communication event $\delta_n^{\{0\}}$ given the history compound event $\mathscr{E}_{n-1:0}$. 
Similar to the predictive index, the predictive coefficient has the following property due to the infinite support of the PDF of the stochastic noise.

\begin{property}\label{pp:overP}
$0 < \overline{P}_n < 1$ holds for all $n = 1, 2, \cdots,T$.
\end{property}

\subsubsection{Relation between the indexes and coefficients}
From the definitions of the predictive index in~\eqref{eq:n_index} and the predictive coefficient~\eqref{eq:n_coef}, we have the following relation,
\begin{equation}\label{eq:pp2}
\begin{split}
P_n =& \, P\!\left( \left. \delta_{n}^{\{0\}}, \delta_{n-1}^{\{0\}}, \cdots, \delta_{1}^{\{0\}} \right| \delta_{0}^{\{1\}}  \right) \\
=& \, P\!\left( \left. \delta_{n}^{\{0\}} \right| \mathscr{E}_{n-1:0}  \right)  P\!\left( \left. \delta_{n-1}^{\{0\}}, \cdots, \delta_{1}^{\{0\}} \right| \delta_{0}^{\{1\}}  \right) \\
=& \, \overline{P}_n P_{n-1},~n=2,\cdots,T,
\end{split}
\end{equation}
with $P_1 = \overline{P}_1$, which renders the following property.

\begin{property}\label{pp:rel}
(Relation of the predictive index and coefficient) The following relation holds.
\begin{equation}\label{eq:PPI}
\textstyle P_n =
\prod_{i=1}^n \overline{P}_i,~n = 1,2,\cdots, T.
\end{equation}
\end{property}

Meanwhile, applying Property~\ref{pr:sum} and Property~\ref{pp:overP} to \eqref{eq:pp2} recursively, we have the following property.
\begin{property}\label{pp:coef}
(Monotonic decrease of predictive index) The condition
$ 0 < P_T < \ldots < P_2 < P_1 < 1$ always holds.
\label{lem:transtion_inequality}
\end{property}

\subsubsection{Time-Invariance of the indexes and coefficients}

Assumption \ref{as:as1} ensures the stochastic process $w_k$, $k \in \mathbb{N}^+$ to be stationary. Thus, its stochastic properties are time-invariant. Therefore, the distribution of the state estimation error in~\eqref{eq:close} and the communication status in~\eqref{eq:schedul_variables_policy} is also invariant to the last communication instant $\iota$. This justifies the following property.

\begin{property}\label{pp:inva}
(Time-invariance of communication probabilities) The following conditions hold $\forall \, n =1,2,\cdots,T$ and $\forall \, \iota \in \mathbb{N}$.
\begin{equation}\label{eq:ppbil}
\begin{split}
P\!\left( \left. \delta_{\iota+n}^{\{0\}}, \delta_{\iota+n-1}^{\{0\}}, \cdots, \delta_{\iota+1}^{\{0\}} \right| \delta_{\iota}^{\{1\}} \right)\, &= P_n, \\
P \!\left( \left. \delta_{\iota+n}^{\{0\}} \right| \mathscr{E}_{\iota+n-1:\iota} \right)\, &= \overline{P}_n.
\end{split}
\end{equation}
\end{property}

Property \ref{pp:inva} proposes an important claim that the predictive indexes $P_n$ and coefficients $\overline{P}_n$ can be used to depict the communication probabilities \eqref{eq:ppbil} for an arbitrary state-sampling time $\iota \in \mathbb{N}$, even though they are originally defined specifically for $\iota = 0$. Nevertheless, we should keep in mind that this property only holds when Assumption \ref{as:as1} is ensured.

\subsection{The Recursive Model of The Transient ACR}\label{sec:rec_communication_rate}

Having introduced the predictive indexes and coefficients, we are ready to present the recursive model for the transient ACR. 
According to Property \ref{py:prob}-3) and \ref{py:prob}-2), we know
\begin{equation*}
P\!\left(\delta_{k}^{\{0\}} \right) = P\!\left( \bigcup_{n=1}^{k} \mathscr{E}_{k:k-n} \right) = \sum_{n=1}^{k} P\!\left( \mathscr{E}_{k:k-n} \right),~ k \in \mathbb{N}^+,
\end{equation*}
which leads to
\begin{equation}\label{eq:pp2r}
P \!\left(\delta_{k}^{\{0\}} \right) = \sum_{n=1}^{k} \underbrace{P\!\left( \left. \delta_{k}^{\{0\}}, \cdots, \delta_{k-n+1}^{\{0\}} \right| \delta_{k-n}^{\{1\}} \right)}_{=P_n} P\!\left( \delta_{k-n}^{\{1\}} \right),
\end{equation}
where we used Property~\ref{pp:inva}. Note that $P_n=0$ holds $\forall \, n >T$ according to Property \ref{py:prob}-\ref{pp:en:1}). Thus, \eqref{eq:pp2r} can be rewritten as
\begin{equation}\label{eq:equap}
\textstyle P \!\left(\delta_{k}^{\{0\}} \right) = \sum_{n=1}^{\min(k,T)} P_n P\!\left( \delta_{k-n}^{\{1\}} \right),~k \in \mathbb{N}^+.
\end{equation}
According to the definition of ACR in (\ref{eq:expectation_delta_k}), \eqref{eq:equap} leads to the following recursive model,
\begin{equation}\label{eq:rec}
\textstyle \E\left(\delta_k \right) = 
1 - \sum^{\min(k,T)}_{n=1} P_{n} \E\left(\delta_{k-n} \right),~k\in \mathbb{N}^+,
\end{equation}
with an initial condition $\E(\delta_0) = 1$. Model~\eqref{eq:rec} depicts the recursive evolution of ACR at an arbitrary sampling instant as time increases. Using Property~\ref{pp:rel}, (\ref{eq:rec}) can be rewritten as
\begin{equation}\label{eq:expcon}
\textstyle \E\left(\delta_k \right) = 
1 - \sum^{\min(k,T)}_{n=1} \prod_{i=1}^n \overline{P}_{i} \E\left(\delta_{k-n} \right),~k\in \mathbb{N}^+.
\end{equation}  

\rv{
\begin{remark}
Here, we should highlight that the coefficients $\overline{P}_1$, $\overline{P}_2$, $\cdots$, $\overline{P}_T$ are dependent on the threshold $\eta$ and the system parameter $A$. Therefore, Eq.~\eqref{eq:expcon} implies the influence of these parameters on the ACR $\E(\delta_k)$, even though they do not explicitly appear in Eq.~\eqref{eq:expcon}. In Remark~\ref{eq:ACRpar}, we have qualitatively analyzed the influence of $\eta$ and $A$ on ACR. We will give the quantitative relation between ACR and these parameters in Sec.~\ref{sec:coeffi}.
\end{remark}
}

The recursive model \eqref{eq:expcon} indicates that the ACR $\E(\delta_k)$ at any time $k \in \mathbb{N}^+$ can be recursively calculated using a finite number of predictive coefficients $\overline{P}_1, \overline{P}_2, \cdots, \overline{P}_T$. \rv{Moreover, we can use \eqref{eq:expcon} to obtain the stationary ACR $\E(\delta_{\infty})$ by taking its limit, if this limit exists.} A critical technical point to precisely compute the transient and stationary ACR values is to obtain the values of these coefficients, which will be explored in Sec.~\ref{sec:coeffi}. \rv{In the following subsections, we will prove the existence of the stationary ACR for a NET-SCS and give the method to calculate its value.}

\subsection{The Existence of The Stationary ACR} \label{sec:conv_ana}

The recursive model~\eqref{eq:rec}, for any $k \geq T$ with $T \in \mathbb{N}^+$, is equivalent to a $T$-order dt-LTI system. This offers us a solution to study the existence of the stationary ACR using the well-known Jury Criterion (See Sec. Appx.-A for details), rendering the following theorem.

\begin{theorem}\label{th:theorem1}
The stationary ACR $\E(\delta_{\infty}) = \displaystyle \lim_{k \rightarrow \infty} \E(\delta_{k})$ derived from the recursive model~\eqref{eq:rec} exists and its value reads
\begin{equation}
\E(\delta_{\infty}) = \textstyle {1}/({1+\sum_{n=1}^{T}\prod_{i=1}^n \overline{P}_{i}}).
\label{eq:stationary_rate}
\end{equation}
\end{theorem}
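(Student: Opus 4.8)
The plan is to split the argument into two independent parts: first compute the candidate value assuming the limit exists, then establish existence, which is where all the real work lies. The value is the easy half. Suppose $\E(\delta_\infty)=L$ exists. For every $k\ge T$ the recursion \eqref{eq:rec} has constant coefficients, $\E(\delta_k)=1-\sum_{n=1}^T P_n\,\E(\delta_{k-n})$ with $P_n=\prod_{i=1}^n\overline{P}_i$; sending $k\to\infty$ and using that each $\E(\delta_{k-n})\to L$ yields $L=1-L\sum_{n=1}^T P_n$, hence $L=1/(1+\sum_{n=1}^T\prod_{i=1}^n\overline{P}_i)$, exactly \eqref{eq:stationary_rate}. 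So the whole content is to show the limit exists at all.

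For existence I would adopt the LTI reading of \eqref{eq:rec} for $k\ge T$ and study the characteristic polynomial $p(z)=z^T+\sum_{n=1}^T P_n z^{T-n}=z^T+\overline{P}_1 z^{T-1}+\overline{P}_1\overline{P}_2 z^{T-2}+\cdots+\prod_{i=1}^T\overline{P}_i$. The constant $L$ above is a particular solution, and the transient $\E(\delta_k)-L$ obeys the homogeneous recursion; its decay to zero, hence convergence of $\E(\delta_k)$ to $L$, is equivalent to asymptotic stability, i.e.\ to all roots of $p(z)$ lying strictly inside the open unit disk, which I would certify via the Jury criterion. The ``boundary'' Jury conditions follow immediately from the already-established Property \ref{pp:coef}: we have $p(1)=1+\sum_{n=1}^T P_n>0$; the alternating value $(-1)^T p(-1)=1-P_1+P_2-\cdots+(-1)^T P_T$ is bounded below by $1-P_1>0$ because $1>P_1>P_2>\cdots>P_T>0$; and the constant term satisfies $|P_T|=\prod_{i=1}^T\overline{P}_i<1$ since each factor lies in $(0,1)$ by Property \ref{pp:overP}.

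The hard part will be the remaining inner Jury-table conditions, which for general order $T$ demand that every reduced row of the Jury array retains a strictly positive leading entry. Here I would exploit the telescoping structure $P_n=\overline{P}_n P_{n-1}$ to simplify successive rows and argue by induction on $T$, anchored by the transparent low-order cases (for $T=1$ the single root is $-\overline{P}_1$; for $T=2$ one needs only $\overline{P}_1\overline{P}_2<1$ and $\overline{P}_1<1+\overline{P}_1\overline{P}_2$, both clear). I expect keeping the sign pattern of the array entries under control for arbitrary $T$ to be the one genuinely delicate step.

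As an independent sanity check on both existence and value, I would note a renewal-theoretic route that sidesteps the polynomial entirely. By Assumption \ref{as:as1} and the time-invariance Property \ref{pp:inva}, the communication instants form a renewal process whose inter-communication time $\tau$ is proper and finitely supported, with $P(\tau>n)=P_n$ for $n\le T$ and tail $0$ beyond $T$. It is aperiodic since $P(\tau=1)=1-\overline{P}_1>0$, so the discrete renewal theorem gives $\E(\delta_k)\to 1/\E[\tau]=1/\sum_{n=0}^T P_n=1/(1+\sum_{n=1}^T\prod_{i=1}^n\overline{P}_i)$, reproducing \eqref{eq:stationary_rate} and confirming the answer obtained from the LTI analysis.
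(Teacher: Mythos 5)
Your primary route is the same as the paper's: its Appendix B likewise rewrites \eqref{eq:rec} for $k\ge T$ as a $T$-th order dt-LTI system, forms the characteristic polynomial $z^T+P_1z^{T-1}+\cdots+P_T$, certifies Schur stability via the Jury criterion using the strict ordering $0<P_T<\cdots<P_1<1$ of Property~\ref{pp:coef}, and then obtains the value exactly as in your first paragraph by passing to the limit in \eqref{eq:rec}. The step you explicitly defer --- the inner Jury-table conditions for arbitrary $T$ --- is precisely where the paper does its work: it computes $b_k=a_0a_k-a_{T-k}a_T$, deduces $-1<b_k<b_{k+1}<0$ and $|b_0|>|b_{T-1}|$ from the ordering, repeats the computation for the $c_k$ row, and asserts the pattern propagates down the array. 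So on the Jury route you have a plan where the paper has a (terse) execution, and your proposal as written is incomplete without carrying out that induction; your boundary checks ($p(1)>0$, the alternating sum at $z=-1$, $|P_T|<1$) are correct and match the paper's Rules 1--3.

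What makes the proposal nonetheless essentially complete is your ``sanity check'': the renewal argument is a self-contained proof of both existence and value, and a genuinely different one from the paper's. Since the error resets to zero at each communication and the noise is i.i.d.\ (Assumption~\ref{as:as1}, Property~\ref{pp:inva}), the inter-communication times are i.i.d.\ with $P(\tau>n)=P_n$ for $n\le T$ and $\tau\le T+1$, so $\E[\tau]=\sum_{n\ge 0}P(\tau>n)=1+\sum_{n=1}^{T}P_n$ and the discrete renewal theorem gives $\E(\delta_k)\to 1/\E[\tau]$, i.e.\ \eqref{eq:stationary_rate}, with no polynomial analysis at all. One caveat: your aperiodicity certificate $P(\tau=1)=1-\overline{P}_1>0$ leans on Property~\ref{pp:overP}, but the paper's own computations (Eq.~\eqref{eq:con2} and the worked example in Appendix F) set $\overline{P}_1=1$ because $\hat{e}_0=0$ deterministically, so in fact $\tau\ge 2$ almost surely. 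Aperiodicity should instead be read off the support: for $T\ge 2$ you have $P(\tau=n)=P_{n-1}-P_n>0$ for $2\le n\le T$ and $P(\tau=T+1)=P_T>0$ by Property~\ref{pp:coef}, so the support contains consecutive integers and has gcd $1$. (For $T=1$ with $\overline{P}_1=1$ the gap is deterministically $2$, $\E(\delta_k)$ alternates $1,0,1,0,\dots$, and the limit genuinely fails; this degenerate case is excluded by the letter of Property~\ref{pp:overP} but permitted by the paper's own $\overline{P}_1=1$ convention, and it would equally break the Jury argument, the root being $-1$.) With that repair, your renewal route establishes Theorem~\ref{th:theorem1} more cleanly than the Jury-table computation, at the cost of importing the renewal theorem rather than remaining elementary.
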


\begin{proof}
See Appx.-B. for the detailed proof.
\end{proof}

Based on a general recursive model~\eqref{eq:rec}, Theorem~\ref{th:theorem1} proves the existence of the stationary ACR for any NET-SCS with an event-triggered communication scheduler and a deterministic constant threshold. This claim does not require any additional conditions, meaning that the stationary ACR in general exists for any NET-SCS defined in this paper. Equation \eqref{eq:stationary_rate} indicates that the stationary ACR can also be calculated using a finite number of predictive coefficients $\overline{P}_n$, $n = 1,2,\cdots,T$, similar to the transient ACR explained in Sec.~\ref{sec:rec_communication_rate}.

\section{Computation of the Predictive Coefficients}\label{sec:coeffi}

As shown in Sec.~\ref{sec:main_result_delay_free}, the computation of both the transient and the stationary ACR requires the predictive coefficients $\overline{P}_n$, $n=1,2,\cdots,T$. This section provides both analytical and numerical approaches to compute these coefficients. Then, we compare our methods with the previous results which apply the restricted Gaussianity assumption. Finally, we present a numerical example to demonstrate our theoretical claims.

\subsection{The Analytical Form of The Predictive Coefficients}\label{sec:crc}

This section explores the analytical method to exactly compute the predictive coefficients.    
According to the definition of the predictive coefficients in Sec.~\ref{sec:rec_pic}, 
we have
\begin{equation}\label{eq:con2}
\textstyle \overline{P}_i = P\!\left(\left.\delta_i^{\{0\}} \right|\mathscr{E}_{i-1:0} \right) 
= \int_{-\eta}^{\eta} p_{\hat{e}_{i-1}}\!\left(z\right) \mathrm{d} z,
\end{equation}
for $i=2,\cdots,T$, with an initial condition $\overline{P}_1 = 1$.
Thus, each coefficient $\overline{P}_i$ is the integration of the PDF of the state estimation error $\hat{e}_{i-1}$ on a finite support set $[\,-\eta,\eta\,]$, where the error recursively evolves following~(\ref{eq:close}). 
Then, the critical technical point is to obtain the analytical form of these PDFs. For each $i = 1,2,\cdots,T-1$, the PDF of $\hat{e}_i$ reads
\begin{equation}\label{eq:pizi11}
\textstyle p_{\hat{e}_i}(z) = \int_{-\eta}^{\eta}p_{\hat{e}^{\eta}_{i-1}}(\xi)p_w(z - A \xi) \mathrm{d} \xi,
\end{equation}
where $p_{\hat{e}^{\eta}_{i-1}}(\cdot)$ denotes the PDF of the truncated stochastic variable $\hat{e}^{\eta}_{i-1}$ of $\hat{e}_{i-1}$ with a symmetric truncation interval $[\,-\eta,\eta\,]$ and $p_w(\cdot)$ is the PDF of the disturbance $w_k$, $k \in \mathbb{N}^+$. Note that $\hat{e}_0 = 0$, and $\hat{e}_1, w_k \sim \mathcal{N}(0,\sigma)$, which yields 
\begin{equation}\label{eq:pdf_0_1}
p_{\hat{e}_0}(z) \!=\! \delta(z),~
p_{\hat{e}_1}(z) \!=\! p_w(z)\!=\! \frac{1}{\sqrt{2\pi}\sigma} \mathrm{exp} \left( - \frac{z^2}{2\sigma^2} \right),
\end{equation}
where $\delta(\cdot)$ is the Dirac delta function.
Thus, the distribution $p_{\hat{e}_i}(z)$ in (\ref{eq:pizi11}) can be obtained as
\begin{equation}\label{eq:pizi22}
p_{\hat{e}_i}(z) = \int_{-\eta}^{\eta}\frac{p_{\hat{e}^{\eta}_{i-1}}(\xi)}{\sqrt{2\pi}\sigma} \mathrm{exp} \left(-\frac{(z-A \xi)^2}{2\sigma^2}\right) \mathrm{d} \xi.
\end{equation}
Note that Eq.~\eqref{eq:pizi22} gives a truncated Gaussian PDF. The details of truncated PDFs can be referred to in Appx.-C which renders
\begin{equation}\label{eq:ptr}
p_{\hat{e}^{\eta}_{i-1}}(z) = G_{\hat{e}_{i-1}}(z) p_{\hat{e}_{i-1}}(z),
\end{equation}
where $p_{\hat{e}_{i-1}}(\cdot)$ is the PDF of the non-truncated variable $\hat{e}_{i-1}$ and $G_{\hat{e}_{i-1}}(\cdot)$ is a piece-wise constant function defined as
\begin{equation}\label{eq:G}
G_{\hat{e}_{i-1}}(z) = \left\{ \begin{array}{ll}
1 / \int_{-\eta}^{\eta} p_{\hat{e}_{i-1}}(\xi) \mathrm{d} \xi , & -\eta \leq z \leq \eta, \\
0, & \mathrm{otherwise}.
\end{array} \right. 
\end{equation}
Substituting (\ref{eq:G}) and (\ref{eq:ptr}) to the PDF~(\ref{eq:pizi22}), we obtain
\begin{equation}\label{eq:detui1}
\!p_{\hat{e}_i}(z) =  \frac{G_{\hat{e}_{i-1}}(z)}{ \sqrt{2\pi}\sigma } \int^{\eta}_{-\eta} p_{\hat{e}_{i-1}}(\xi) \;\mathrm{exp}\!\left( \!- \frac{\left(z - A \xi \right)^2}{2\sigma^2} \right) \mathrm{d} \xi .
\end{equation}
Thus, equations \eqref{eq:pdf_0_1} and \eqref{eq:detui1} form a complete recursive model to solve the analytical forms of the PDFs of the estimation errors, $p_{\hat{e}_i}(\cdot)$, for all $i = 1, 2, \cdots, T$. Then, \eqref{eq:con2} can be used to accurately calculate the predictive coefficients $\overline{P}_i$, for $i = 2, 3, \cdots, T-1$. Note that, for any $i = 2,3,\cdots, T$, the PDF $p_{{\hat{e}}_i}(\cdot)$ is not necessarily Gaussian due to the recursive truncation operations. Also, the analytical form of $p_{{\hat{e}}_i}(\cdot)$ becomes increasingly complicated and challenging to solve as $i$ gets larger. To resolve this issue, in the next section, we propose a numerical algorithm to approximate the predictive coefficients using the recursive stochastic sampling technique.

\subsection{Approximating The Predictive Coefficients Numerically}\label{sec:nm}

Considering the difficulty of analytically computing the coefficients $\overline{P}_i$ for large $i$, we propose a numerical algorithm to approximate them using the recursive stochastic sampling method, as shown in Algorithm~\ref{ag:rme}. The computation of $\overline{P}_1$ and $\overline{P}_2$ in Line 1 is straightforward since the analytical forms of $p_{\hat{e}_0}(\cdot)$ and $p_{\hat{e}_1}(\cdot)$ are trivial and simple. In Line 2, $N$ particles are initialized from the distribution $p_{\hat{e}_1}(\cdot)$, i.e., a Gaussian distribution $\mathcal{N}(0, \sigma)$. From Line 3, the particles are used to approximate the nontrivial PDFs $p_{\hat{e}_i}(\cdot)$ for $i \geq 2$. The particles are a group of real scalars independently drawn from a certain distribution. Consider that $N \in \mathbb{N}^+$ particles $\mathcal{Z} = \{z^{(1)}, z^{(2)}, \cdots, z^{(N)}\}$, $z^{(i)} \in \mathbb{R}$, $i = 1,2,\cdots,N$, are independently drawn from a distribution depicted by a PDF $p(\cdot)$. Then, the unbiased estimation of $p(\cdot)$ can be obtained using a Gaussian kernel method as
\begin{equation}\label{eq:g_kernel}
\hat{p}(z, \mathcal{Z}) = \frac{1}{\sqrt{2\pi}\hat{\sigma} N} \sum_{j=1}^{N} \mathrm{exp} \left(-\frac{\left(z-z^{(j)}\right)^2}{2\hat{\sigma}^2}\right),
\end{equation}
where $\hat{\sigma} \in \mathbb{R}^+$ is a variance parameter. Here, we use the symbol $\hat{p}(\cdot)$ to represent the PDFs approximated using particles. Based on the approximated PDFs $\hat{p}_{\hat{e}_i}(\cdot)$, the predictive coefficients $\overline{P}_{i+1}$ are calculated recursively, following the flow $\hat{p}_{\hat{e}_{i-1}}(\cdot) \rightarrow \hat{p}_{\hat{e}^{\eta}_{i-1}}(\cdot) \rightarrow \hat{p}_{\hat{e}_i}(\cdot) \rightarrow \overline{P}_{i+1}$. 
 
The approximation in each iteration is described as follows. In line 4, the particles exceeding the threshold $\eta$ are removed, which simulates the truncation operation to the PDF $p_{\hat{e}_{i-1}}(\cdot)$. Then, in line 5, the PDF $p_{\hat{e}^{\eta}_{i-1}}(\cdot)$ of the truncated stochastic variable $\hat{e}^{\eta}_{i-1}$ is approximated with the remaining particles. In line 6, $N$ particles are resampled from the approximated PDF $\hat{p}_{\hat{e}^{\eta}_{i-1}}(\cdot)$. The particles then perform the statistical propagation according to the error dynamics \eqref{eq:close}, as shown in lines 7-10. In line 11, the particle approximation method~\eqref{eq:g_kernel} is used again to approximate the PDF $p_{\hat{e}_i}(\cdot)$. Finally, the predictive coefficient $\overline{P}_{i+1}$ are calculated in line 14. 

\begin{algorithm}[htbp]
\caption{Approximation of the predictive coefficients using particles}
\label{ag:rme}
\SetKwData{Left}{left}\SetKwData{This}{this}\SetKwData{Up}{up}
\SetKwFunction{Union}{Union}\SetKwFunction{FindCompress}{FindCompress}
\SetKwInOut{Input}{Inputs}\SetKwInOut{Output}{Outputs}
  
\Input{noise variance $\sigma$ and particle number $N$}
\Output{$\overline{P}_i,~\forall \, i =1,2,\cdots,T$, $T>2$}

Calculate $\overline{P}_1, \overline{P}_2$ using \eqref{eq:con2} with $p_{\hat{e}_0}(\cdot)$, $p_{\hat{e}_1}(\cdot)$ in \eqref{eq:pdf_0_1}\;
Sample particles $z_1^{(j)} \sim \mathcal{N}(0,\sigma)$, $j=1,2,\cdots,N$\;

\For{$i \leftarrow 2$ \KwTo $T-1$}{
    Remove all particles $\left|z_{i-1}^{(j)}\right| \geq \eta$\;
    Approximate $\hat{p}_{\hat{e}^{\eta}_{i-1}}(\cdot)$ with $z_{i-1}^{(j)}$ using \eqref{eq:g_kernel}\;
    
    Re-sample particles $z_{i-1}^{(j)} \sim \hat{p}_{\hat{e}^{\eta}_{i-1}}(\cdot)$; $j=1,2,\cdots,N$\;
    \For{$j\leftarrow 1$ \KwTo $N$}{
        Draw $\epsilon_{i-1}^{(j)} \sim \mathcal{N}(0,\sigma)$\;
        $z_{i}^{(j)} =A z_{i-1}^{(j)} +\epsilon_{i-1}^{(j)}$\;
        }
        Approximate $p_{\hat{e}_{i}}(\cdot)$ with $z_{i}^{(j)}$ using \eqref{eq:g_kernel}\;
        Calculate $\overline{P}_{i+1}$ with \eqref{eq:con2} using PDF $p_{\hat{e}_{i}}(\cdot)$\;
        }
    
\end{algorithm}

Note that Algorithm \ref{ag:rme} may lead to approximation errors in the predictive coefficients. The main source of the errors is the deviation between the PDFs $p_{\hat{e}_i}(\cdot)$ and their estimations $\hat{p}_{\hat{e}_i}(\cdot)$. In fact, the unbiasedness of the approximation only holds in the statistical sense. To reduce the approximation errors, $N$ should be selected sufficiently large, and $\hat{\sigma}$ should be small.

\begin{remark}
In this paper, our theoretical claims and numerical methods target a specific class of NET-SCS, where the network communication is triggered by an asynchronous event associated with state estimation errors. In fact, the state estimation error $e_k$, $k \in \mathbb{N}$, can be recognized as a variable that depends on the internal states of the joint dynamic model of the system plant and the state estimator, namely the plant state $x_k$ and the estimator state $\hat{x}_k$. Thus, our results can also be extended to a generic NET-SCS of which the triggering event may be assigned to an arbitrary state-dependent variable. In this case, the recursive model of the ACR is still effective. What changes is that the predictive coefficients are calculated using the PDF of this state-dependent variable. The challenge of such an extension depends on the complexity of this PDF. 
\end{remark}

\rv{Now, we briefly discuss the possible extension of the results of this paper to a multi-dimensional NET-SCS. The main challenge preventing this extension is the derivation of a multi-dimensional ACR recursive model for a multi-dimensional NET-SCS based on a proper multi-variable event-triggered scheduler, which remains an open question. Once a multi-dimensional ACR recursive model is obtained, it is possible to extend Algorithm~\ref{ag:rme} to approximate the multi-dimensional ACR numerically by substituting the sampling distribution $\mathcal{N}(0, \sigma)$ with a multi-variable PDF.}

\section{Comparison with the Conventional Method}\label{sec:compa}

Based on Sec.~\ref{sec:main_result_delay_free} and Sec.~\ref{sec:coeffi}, we are able to calculate the stationary and the transient ACR for a NET-SCS using a finite number of predictive coefficients. The analytical and numerical methods to compute these coefficients are also provided. In this section, we make a comparison between our approaches and the conventional method~\cite{ebner2016communication} that intentionally ignores the side information for simplification. Both theoretical analysis and a numerical study are conducted to validate the accuracy of our approaches and qualitatively verify the accuracy gap between the conventional method and the ground truth.

\subsection{Deviation Analysis of The Conventional Method} \label{sec:comp_gaussian}

As mentioned above, the computation of ACR without incorporating the side information leads to an oversimplified distribution model for the state estimation error and eventually returns approximated results. 
Assume that the open-loop state estimation error is subject to the dynamic model~\eqref{eq:open}. Then, the error has a fully Gaussian PDF, and, similar to \eqref{eq:expcon}, the open-loop ACR can be recursively computed as
\begin{equation}\label{eq:expcon_2}
\textstyle \breve{\E}\left(\delta_k \right) = 
1 - \sum^{\min(k,T)}_{n=1} \prod_{j=1}^n \breve{\overline{P}}_{j} \breve{\E}\left(\delta_{k-n} \right),~k\in \mathbb{N}^+,
\end{equation} 
with $\breve{E}(\delta_0) = 1$, where $\breve{\overline{P}}_i$, $i=1,2,\cdots,T,$ are the coefficients obtained by
\begin{equation}\label{eq:conv_2}
\textstyle \breve{\overline{P}}_i =\int^{\eta}_{-\eta} p_{e_{i-1}}(z) \mathrm{d} z,~~i=1,2,\cdots,T,
\end{equation}
where $p_{e_i}(\cdot)$ is the PDF of the open-loop state estimation error $e_i$ subject to the dynamic model~\eqref{eq:open} with $\iota =0$. Hence, according to \eqref{eq:open}, for all $i>1$, we have
\begin{equation}\label{eq:pizi1}
\begin{split}
p_{e_i}(z) = &\textstyle  \, \int_{-\infty}^{\infty}p_{e_{i-1}}(\xi)\;p_w(z - A \xi) \mathrm{d} \xi \\
= & \, \int_{-\infty}^{\infty}\frac{p_{e_{i-1}}(\xi)}{\sqrt{2\pi}\sigma} \;\mathrm{exp} \left(-\frac{(z-A\xi)^2}{2\sigma^2}\right) \mathrm{d} \xi,
\end{split}
\end{equation}
with the initial conditions 
\begin{equation}\label{eq:pizi0}
p_{e_0}(z) = \delta(z),~~~p_{e_1}(z) = 
\frac{1}{\sqrt{2\pi}\sigma} \;\mathrm{exp} \left( - \frac{z^2}{2\sigma^2} \right).
\end{equation}
Comparing (\ref{eq:detui1}) and (\ref{eq:pizi1}), one notices that $\hat{e}_1$ and $e_1$ have the same distribution $\mathcal{N}(0,\sigma)$, while for each $i > 1$, $p_{\hat{e}_i}(\cdot)$ has an additional multiplier $G_{\hat{e}_{i-1}}(\cdot)$, compared to $p_{e_i}(\cdot)$. Also, the integration intervals are also different.

Now, we compare the mean values and the variances of the two stochastic variables $\hat{e}_i$ and $e_i$ for $i = 1,2,\cdots,T$. From~\eqref{eq:open1}, we know that the state estimation error $e_i$ is a linear combination of the Gaussian-distributed stochastic variables $w_0$, $\cdots$, $w_{i-1}$. Hence, $e_i$ is also Gaussian-distributed and has the following property. 

\begin{property}\label{pp:oldi}
Given that $w_{0}$, $\cdots$, $w_{T-1}$ are i.i.d. stochastic variables (Assumption \ref{as:as1}), the following statements hold for all $e_i$, $i=1,2,\cdots,T$.
\begin{enumerate}
\item \label{it:pek} For any $z \in \mathbb{R}$, $p_{e_i}(z) = p_{e_i}(-z)$.\vspace{1.5mm}
\item \label{it:Eek}$\E\!\left( e_i \right) = \sum_{n=\iota}^{i-1} A^{i-n-1} \E(w_n) = 0$.\vspace{1.5mm}
\item \label{it:Vek} $\mathrm{Var}\!\left( e_i \right) = \sum_{n=\iota}^{i-1} A^{i-n-1} \mathrm{Var}\!\left(w_n\right) = \sum_{n=\iota}^{i-1} A^{i-n-1} \sigma^2$.
\end{enumerate}
\end{property}

Property~\ref{pp:oldi} is easy to verify using the linear properties of Gaussian stochastic variables. Nevertheless, the stochastic properties of the state estimation error $\hat{e}_i$ are not straightforward due to the recursive truncation operations. Before proceeding with the study on the stochastic properties of $\hat{e}_i$, it is necessary to propose the following proposition for truncated stochastic variables.

\begin{proposition}\label{pp:trunc}
Let $\zeta \in \mathbb{R}$ be an arbitrary stochastic variable of which the PDF $p_{\zeta}(z)$ has infinite support. $\E(\zeta)$ and $\mathrm{Var}(\zeta)$ are respectively its mean value and variance. Also, let $\zeta^{\eta} \in \mathbb{R}$ be a truncated stochastic variable by trimming the support of $\zeta$ to be within the symmetrically bilateral interval $[\,-\eta,\eta\,]$, $\eta>0$. If $\E(\zeta)=0$, and $p_{\zeta}(z) = p_{\zeta}(-z)$ holds for all $z \in \mathbb{R}$, then the following conditions are valid.
\begin{enumerate}
\item $\E(\zeta^{\eta}) = 0$, and $p_{\zeta^{\eta}}(z) = p_{\zeta^{\eta}}(-z)$, $\forall \, z \in \mathbb{R}$.
\item $\mathrm{Var}(\zeta^{\eta}) < \mathrm{Var}(\zeta)$.
\end{enumerate}
\end{proposition}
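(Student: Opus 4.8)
The plan is to exploit the explicit form of the truncated density established in Appx.-C, namely $p_{\zeta^{\eta}}(z) = G_{\zeta}(z)\,p_{\zeta}(z)$ with $G_{\zeta}$ the piecewise-constant rescaling of the type in~\eqref{eq:G}, and then reduce both claims to elementary parity and integral-comparison arguments. Throughout, write $q = \int_{-\eta}^{\eta} p_{\zeta}(\xi)\,\mathrm{d}\xi = P(|\zeta| \leq \eta)$ for the retained probability mass. Since $p_{\zeta}$ has infinite support, $0 < q < 1$, so the discarded mass $1-q > 0$ is strictly positive; this is the single feature that ultimately forces the strict inequality in condition~2).

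For condition 1), I would first record that $p_{\zeta^{\eta}}(z) = p_{\zeta}(z)/q$ on $[\,-\eta,\eta\,]$ and $p_{\zeta^{\eta}}(z) = 0$ otherwise. Symmetry of $p_{\zeta^{\eta}}$ is then immediate: for $|z| \leq \eta$ we have $p_{\zeta^{\eta}}(-z) = p_{\zeta}(-z)/q = p_{\zeta}(z)/q = p_{\zeta^{\eta}}(z)$ by the assumed evenness of $p_{\zeta}$, while for $|z| > \eta$ both sides vanish because the truncation window is symmetric. Given this evenness, $\E(\zeta^{\eta}) = \int_{-\eta}^{\eta} z\, p_{\zeta^{\eta}}(z)\,\mathrm{d}z$ is the integral of an odd integrand over a symmetric interval and hence equals zero.

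For condition 2), since both variables have zero mean, the variances coincide with the raw second moments. I would split the second moment of $\zeta$ across the truncation boundary by setting $I_{\text{in}} = \int_{-\eta}^{\eta} z^2 p_{\zeta}(z)\,\mathrm{d}z$ and $I_{\text{out}} = \int_{|z|>\eta} z^2 p_{\zeta}(z)\,\mathrm{d}z$, so that $\mathrm{Var}(\zeta) = I_{\text{in}} + I_{\text{out}}$ whereas $\mathrm{Var}(\zeta^{\eta}) = I_{\text{in}}/q$. The target inequality $I_{\text{in}}/q < I_{\text{in}} + I_{\text{out}}$ rearranges, after multiplying by $q > 0$, to $I_{\text{in}}(1-q) < q\, I_{\text{out}}$. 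To close it I would invoke the two one-sided bounds created by the truncation level: $z^2 \leq \eta^2$ on the retained region gives $I_{\text{in}} \leq \eta^2 q$, whereas $z^2 > \eta^2$ on the discarded region gives the strict bound $I_{\text{out}} > \eta^2(1-q)$. Chaining them yields $I_{\text{in}}(1-q) \leq \eta^2 q(1-q) < q\, I_{\text{out}}$, which is precisely the required strict inequality.

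The main obstacle, and indeed the only delicate point, is securing strictness rather than a weak inequality in condition 2). The bound $I_{\text{in}} \leq \eta^2 q$ may hold with equality (for instance if $\zeta$ concentrated its inside-mass at $\pm\eta$), so strictness cannot originate there; it must come from $I_{\text{out}} > \eta^2(1-q)$, which is strict precisely because $z^2$ strictly exceeds $\eta^2$ on the set $\{|z| > \eta\}$ and that set carries positive probability $1-q > 0$. This is exactly where the infinite-support hypothesis is indispensable: were $p_{\zeta}$ supported inside $[\,-\eta,\eta\,]$ we would have $\zeta^{\eta} = \zeta$ and no reduction at all, so flagging and using $1-q > 0$ is the crux of the argument.
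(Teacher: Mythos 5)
Your proof is correct, and while condition~1) is handled the same way as in the paper (evenness of $p_{\zeta^{\eta}}$ from the rescaled-density formula, then an odd integrand over a symmetric interval), your argument for condition~2) takes a genuinely different route. The paper views the truncated variance as a function $\mathrm{Var}(\eta)$ of the truncation level, differentiates it, shows $\mathrm{Var}'(\eta)>0$ via the quotient rule and the boundary terms $2\eta^{2}p_{\zeta}(\eta)$ and $2p_{\zeta}(\eta)$, and then concludes $\mathrm{Var}(\zeta^{\eta})=\mathrm{Var}(\eta)<\lim_{\eta\to\infty}\mathrm{Var}(\eta)=\mathrm{Var}(\zeta)$ by monotonicity. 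You instead decompose the raw second moment across the truncation boundary, $\mathrm{Var}(\zeta)=I_{\text{in}}+I_{\text{out}}$ with $\mathrm{Var}(\zeta^{\eta})=I_{\text{in}}/q$, reduce the claim to $I_{\text{in}}(1-q)<q\,I_{\text{out}}$, and close it with the two pointwise bounds $z^{2}\leq\eta^{2}$ inside and $z^{2}>\eta^{2}$ outside the window. Your version is more elementary and arguably more robust: it avoids the differentiability and limit-interchange steps the paper leaves tacit (its derivative computation implicitly needs $p_{\zeta}$ continuous at $\pm\eta$, and its final displayed expression for $\mathrm{Var}'(\eta)$ in fact contains a typo in the denominator), and it correctly locates the source of strictness in $I_{\text{out}}>\eta^{2}(1-q)$ together with $1-q>0$ from infinite support. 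What the paper's calculus argument buys in exchange is a strictly stronger conclusion, namely that $\mathrm{Var}(\zeta^{\eta})$ is monotonically increasing in $\eta$ with $\lim_{\eta\to 0}\mathrm{Var}(\eta)=0$, not merely dominated by $\mathrm{Var}(\zeta)$. One shared implicit hypothesis worth flagging in your write-up: both proofs divide by $q=\int_{-\eta}^{\eta}p_{\zeta}(z)\,\mathrm{d}z$, so $q>0$ must be assumed (infinite support only guarantees $q<1$); this is harmless in the paper's setting, where the relevant densities are everywhere positive.
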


\begin{proof}
See Appx.-D for the details of this proof.
\end{proof}

Proposition~\ref{pp:trunc} indicates that a truncated stochastic variable has the same expected value but a smaller variance than its original counterpart if the latter has an even PDF around zero and the truncation interval is symmetric. We now present the following theorem that characterizes the relation between the mean values and variances of the state estimation errors.

\begin{theorem}\label{pp:e0}
Given state estimation errors $\hat{e}_i$ and $e_i$ depicted by the dynamic models~(\ref{eq:close}) and~(\ref{eq:open}), respectively, $i = 1,2,\cdots,T$, the following conditions hold.
\begin{enumerate}
\item $p_{\hat{e}_i}(z) = p_{\hat{e}_i}(-z)$, for all $z \in \mathbb{R}$.
\item $\E\!\left(\hat{e}_i \right) = \E(e_i) = 0$.
\item $\mathrm{Var}(\hat{e}_1) = \mathrm{Var}(e_1)$, $\mathrm{Var}(\hat{e}_i) < \mathrm{Var}(e_i)$, for all $i \geq 2$.
\end{enumerate}
\end{theorem}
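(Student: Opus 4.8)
The plan is to prove items 1) and 2) jointly by induction on $i$, and then to leverage them, together with Proposition~\ref{pp:trunc}, to establish item 3). All three claims share the base case $i=1$: since $\hat{e}_1 = e_1 = w_0 \sim \mathcal{N}(0,\sigma^2)$, the density $p_{\hat{e}_1}$ is even, has zero mean, and $\mathrm{Var}(\hat{e}_1)=\mathrm{Var}(e_1)=\sigma^2$. As a standing observation I would record that every $\hat{e}_i$, $i\ge 1$, has infinite support: $\hat{e}_1$ is Gaussian, and for $i\ge 2$ the convolution~\eqref{eq:pizi11} against the everywhere-positive Gaussian kernel $p_w$ is strictly positive for all $z\in\R$. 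This is exactly the hypothesis Proposition~\ref{pp:trunc} needs, so it will be available at every step.

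For the inductive step of items 1)--2), I would assume $p_{\hat{e}_{i-1}}$ is even with zero mean. By Proposition~\ref{pp:trunc}-1) (applicable since $\hat{e}_{i-1}$ has infinite support), the truncated density $p_{\hat{e}^{\eta}_{i-1}}$ is again even with zero mean; alternatively this follows directly from~\eqref{eq:ptr}, because the normalizer $G_{\hat{e}_{i-1}}$ in~\eqref{eq:G} is itself even (constant on the symmetric interval $[-\eta,\eta]$, zero outside). Feeding this even density into the convolution~\eqref{eq:pizi11} and substituting $\xi \mapsto -\xi$ (the interval $[-\eta,\eta]$ is invariant under this map), together with the evenness of the Gaussian kernel $p_w$, yields $p_{\hat{e}_i}(-z)=p_{\hat{e}_i}(z)$, which is item 1). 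Item 2) then follows at once, since an even density with finite first absolute moment has zero mean; the moment exists because $\hat{e}_i$ is a convolution of a compactly supported density with a Gaussian. Zero-mean of $e_i$ is Property~\ref{pp:oldi}-2).

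For item 3) the key device is a variance recursion read off from~\eqref{eq:pizi11}. Computing the second moment and interchanging the order of integration, the inner Gaussian integral against $p_w(z-A\xi)$ contributes $A^2\xi^2+\sigma^2$, so that $\mathrm{Var}(\hat{e}_i)=A^2\,\mathrm{Var}(\hat{e}^{\eta}_{i-1})+\sigma^2$, the cross term vanishing because the mean is zero; the identical computation on~\eqref{eq:pizi1} gives $\mathrm{Var}(e_i)=A^2\,\mathrm{Var}(e_{i-1})+\sigma^2$, consistent with Property~\ref{pp:oldi}-3). The base $i=1$ is the equality already noted. For $i\ge 2$, I would invoke Proposition~\ref{pp:trunc}-2) on $\hat{e}_{i-1}$, which is zero-mean, even, and of infinite support by the preceding steps, obtaining $\mathrm{Var}(\hat{e}^{\eta}_{i-1})<\mathrm{Var}(\hat{e}_{i-1})$. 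Combining this with the induction hypothesis $\mathrm{Var}(\hat{e}_{i-1})\le\mathrm{Var}(e_{i-1})$ and the two recursions (using $A\neq 0$) gives $\mathrm{Var}(\hat{e}_i)=A^2\,\mathrm{Var}(\hat{e}^{\eta}_{i-1})+\sigma^2<A^2\,\mathrm{Var}(e_{i-1})+\sigma^2=\mathrm{Var}(e_i)$, closing the induction.

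The main obstacle is bookkeeping rather than depth: each application of Proposition~\ref{pp:trunc} must be certified, i.e.\ one must carry the zero-mean, evenness, and infinite-support properties of $\hat{e}_{i-1}$ forward, which is precisely why items 1)--2) have to be advanced in lockstep with item 3) rather than proved afterwards. A secondary point I would flag explicitly is the strictness in item 3): the inequality is strict only because $A\neq 0$ (if $A=0$ then $\hat{e}_i=e_i=w_{i-1}$ and the variances coincide), so I would adopt the standing assumption $A\neq 0$, which is the practically relevant case.
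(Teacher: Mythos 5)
Your proof is correct and follows essentially the same route as the paper's Appendix-E argument: induction with the Gaussian base case, Proposition~\ref{pp:trunc} applied to the truncated error, the $\xi \mapsto -\xi$ substitution in the convolution for evenness (hence zero mean), and the variance recursions $\mathrm{Var}(\hat{e}_{i+1})=A^2\,\mathrm{Var}(\hat{e}_i^{\eta})+\sigma^2$ and $\mathrm{Var}(e_{i+1})=A^2\,\mathrm{Var}(e_i)+\sigma^2$ combined with $\mathrm{Var}(\hat{e}_i^{\eta})<\mathrm{Var}(\hat{e}_i)$. Your two extra certifications --- that each $\hat{e}_i$ has an everywhere-positive density, so the hypotheses of Proposition~\ref{pp:trunc} genuinely hold, and that the strict inequality in item 3) requires $A\neq 0$ (the paper's step $A^2\,\mathrm{Var}(\hat{e}_i^{\eta})+\sigma^2 < A^2\,\mathrm{Var}(\hat{e}_i)+\sigma^2$ silently assumes this) --- are points the paper leaves implicit, and the $A\neq 0$ caveat is a legitimate refinement rather than a deviation in method.
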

\begin{proof}
See Appx.-E for the details of this proof.
\end{proof}

Theorem~\ref{pp:e0} indicates the qualitative difference between the PDFs, the mean values, and the variances of the estimation error $\hat{e}_i$ and the open-loop error $e_i$ for $i=1,2,\cdots,T$. Both of them have even PDFs and zero mean values. Nevertheless, the estimation error $\hat{e}_i$ has a smaller variance than the open-loop one $e_i$, for $i>1$. This indicates that the recursive truncation operations in~\eqref{eq:close} result in a \textit{shrink} in the PDF $p_{\hat{e}_i}(z)$ along the $z$-axis compared to the infinite support Gaussian PDF $p_{e_i}(z)$. Therefore, for any $i>1$, Theorem~\ref{pp:e0} results in
\begin{equation*}
\textstyle \int^{\eta}_{-\eta} p_{\hat{e}_i} (z)\mathrm{d}z > \int^{\eta}_{-\eta} p_{e_i} (z)\mathrm{d}z.
\end{equation*}
This can be explained intuitively that the shape of $p_{\hat{e}_i}(\cdot)$ is more \textit{narrow} than $p_{e_i}(\cdot)$. Based on this, we can infer that the conventional method using $p_{e_i}(\cdot)$ instead of $p_{\hat{e}_i}(\cdot)$ leads to smaller results for the coefficients, i.e., $\breve{\overline{P}}_{i+1} < \overline{P}_{i+1}$ for $i=3,\cdots,T$, according to \eqref{eq:con2}, and then larger values of the transient ACR, i.e., $\breve{\E}(\delta_k) < \E(\delta_k)$ for $k=3,\cdots,T$. Extending this claim to $k \rightarrow \infty$, we also have a similar conclusion for the stationary ACR, i.e., $\breve{\E}(\delta_{\infty}) < \E(\delta_{\infty})$.

The analysis in this section not only proves the accuracy gap of the conventional method in theory but also qualitatively points out that it always leads to larger computation results. 

\subsection{Accuracy Comparison: A Numerical Example}\label{sec:eg}
We use a numerical example to verify the accuracy of our proposed methods in Sec.~\ref{sec:crc} and Sec.~\ref{sec:nm}, respectively. We also validate the accuracy gap of the conventional method that ignores the close-loop effect. 
Consider a NET-SCS as in~\eqref{eq:plant} with parameters $A = 1.25$, $B = 1$, an initial state $x_0 = -2$, a stochastic process $w_k \sim \mathcal{N}(0,1)$, $k \in \mathbb{N}^+$, and a state-feedback controller $u_{k} = -\hat{x}_{k}$, where $\hat{x}_k$ is estimated using~\eqref{eq:est}. The threshold and the maximum triggering interval of the event-triggered scheduler~\eqref{eq:schedul_variables_policy} are $\eta=1$, $T=5$.  As addressed in Sec.~\ref{sec:comp_gaussian}, the major difference between our work and the existing works is that the latter ignores the side information of a NET-SCS and uses the open-loop estimation error $e_k$ to compute ACR, instead of the estimation error $\hat{e}_k$. 
We use five manners to compute the transient ACR to provide a fair and clear comparison study.

\subsubsection{The Proposed Analytical Method (PAM)}
The recursive expressions \eqref{eq:pdf_0_1} and \eqref{eq:detui1} are used to obtain the PDFs $p_{\hat{e}_i}(\cdot)$ of the state estimation error $\hat{e}_i$ for $i=0,1,\cdots,4$. Then, the coefficients $\overline{P}_{i+1}$ are calculated using \eqref{eq:con2}. Finally, \eqref{eq:expcon} is recursively used to compute the transient ACR $\E(\delta_k)$ for $k=1,2,\cdots,5$.
An example of this calculation is provided in Appx.-F, where we only give the results for $k=1,2,3$, due to the complexity of analytical computation for larger $k$ values.

\subsubsection{The Proposed Numerical Method (PNM)}

Algorithm~\ref{ag:rme} is used to approximate the PDFs $p_{\hat{e}_i}(\cdot)$ of the state estimation errors $\hat{e}_i$ for $i=0,1,\cdots,4$, with parameters $\hat{\sigma}=0.1$ and $N=10^4$. Then, the coefficients $\overline{P}_{i+1}$ are calculated using \eqref{eq:con2}. Finally, \eqref{eq:expcon} is recursively used to compute the transient ACR $\E(\delta_k)$ for $k=1,2,\cdots,5$.

\subsubsection{The Conventional Analytical Method (CAM)}

\cite{ebner2016communication}~The recursive expressions \eqref{eq:pizi1} and \eqref{eq:pizi0} are used to obtain the PDFs $p_{e_i}(\cdot)$ of the open-loop state estimation errors $e_i$ for $i=0,1,\cdots,4$. Then, the open-loop predictive coefficients $\breve{\overline{P}}_{i+1}$ are calculated using \eqref{eq:conv_2}. Finally, \eqref{eq:expcon_2} is recursively used to compute the transient ACR $\breve{\E}(\delta_k)$ for $k=1,2,\cdots,5$.

\subsubsection{The Conventional Numerical Method (CNM)}

This approach is merely used to provide a numerical counterpart of the CAM approach for the completeness of our work.
We first use Algorithm~\ref{ag:rme}, with the same parameters $\hat{\sigma}=0.1$ and $N=10^4$ as PNM but with the lines 4-6 removed, to calculate the open-loop predictive coefficients $\breve{\overline{P}}_{i+1}$. Then, \eqref{eq:expcon_2} is recursively used to compute the transient ACR $\breve{\E}(\delta_k)$ for $k=1,2,\cdots,5$.

\subsubsection{Ground Truth (GT)}

We conduct a Monte-Carlo experiment of the NET-SCS with the same initial state repeated for $10^4$ trials to approximate the true value of ACR,
\begin{equation*}
\E_{GT}(\delta_k) = \#(\delta_k=1)/10^{4},
\end{equation*}
where $\#(\delta_k=1)$ is the total number of trials of which $\delta_k = 1$.

\begin{figure*}[htbp]
\centering
\subfloat[$k=2$]
{\includegraphics[width=0.23\textwidth]{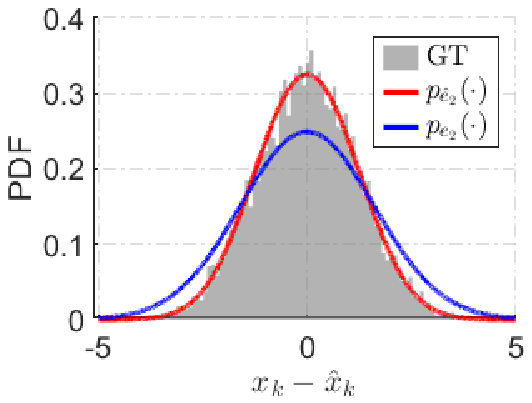}}
\subfloat[$k=3$]{\includegraphics[width=0.23\textwidth]{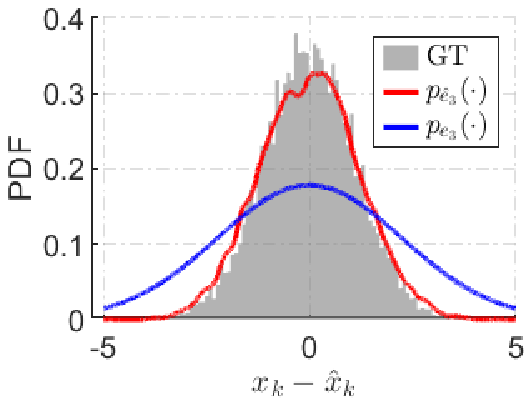}}
\qquad
\subfloat[$k=4$]{\includegraphics[width=0.23\textwidth]{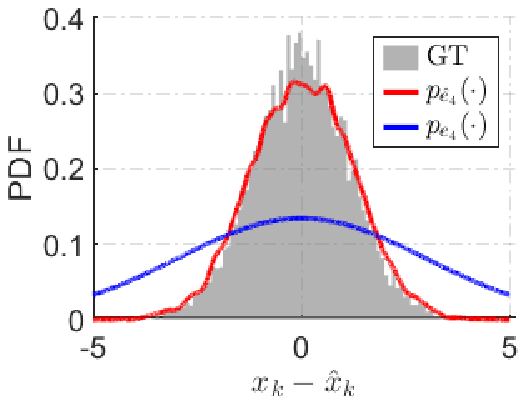}}
\subfloat[$k=5$]{\includegraphics[width=0.23\textwidth]{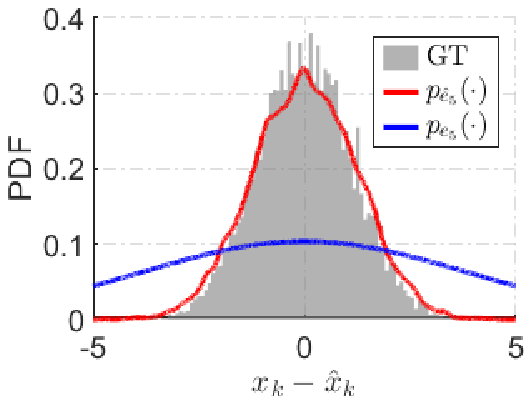}}
\caption{The approximated PDFs of the state estimation errors for $k \!=\! 2,3,4,5$. The Red line is $\hat{p}_{\hat{e}_k}(\cdot)$ calculated using our proposed methods ($k=2$ using PAM and $k=3,4,5$ using PNM) and the blue line is $\hat{p}_{e_k}(\cdot)$ obtained from the conventional approach (CAM). The gray area represents the GT PDF using Monte Carlo sampling.}
\label{fig:fig2}
\end{figure*}

The PDFs calculated using $p_{\hat{e}_k}(\cdot)$ and $p_{e_k}(\cdot)$, for $k=2,3,4,5$, are illustrated in Fig.~\ref{fig:fig2}, in red and blue, respectively. The GT PDF of the state estimation errors, drawn as the gray area, obtained by conducting a Monte Carlo experiment, is also presented for comparison. We observe that our proposed method accurately follows the GT. On the contrary, the conventional method obviously deviates from the GT results. The deviation becomes larger as $k$ increases. This qualitatively verifies our theoretical claims in Sec.~\ref{sec:comp_gaussian}.

Our theoretical claims can also be justified by quantitative results from the numerical study, as presented in Table~\ref{tab:acr}. Slight deviations are seen between PAM and PNM or between CAM and CNM. Note that these deviations reflect the inevitable approximation errors between the analytical methods and their numerical counterparts due to the approximation bias of the Gaussian kernel method. Incorporating these errors, we can see that the results of PAM and PNM are very close to the ground truth (with absolute errors smaller than $0.005$), validating the accuracy of the proposed methods. On the contrary, the results of CAM and CNM present large calculation errors. Moreover, they are all larger than the GT results, verifying our theoretical arguments in Sec.~\ref{sec:comp_gaussian} that the conventional method overapproximates the ACR values.

\linespread{1.2}
\begin{table}[htbp]
\caption{The GT and the computed ACR values for NET-SCS}
\label{tab:acr}
\begin{center}
\begin{tabular}{c|c|c|c|c|c}
\hline
$k$ & GT & PAM & PNM & CAM & CNM  \\\hline
1 & $0$ & $0$ & $0$ & $0$ & $0$     \\
2 & $0.3175$ & $0.3173$ & $0.3129$ & $0.3173$ & $0.3161$      \\
3 & $0.2826$ & $0.2818$ & $0.2877$ & $0.3633$ &  $0.3668$     \\
4 & $0.2650$ & --- & $0.2609$ & $0.3098$  & $0.3082$\\
5 & $0.2801$ & --- &$0.2797$ & $0.3117$ &$0.3126$\\
\hline
\end{tabular}
\end{center}
\end{table}
\linespread{1}

More details can be found by looking into the stochastic properties of the state estimation errors. Table~\ref{tab:ev} shows the mean values $\E(\cdot)$ and the variances $\mathrm{Var}(\cdot)$ of the estimation error $\hat{e}_k$ and open-loop error $e_k$ for $k=1,\cdots,5$. It can be seen that their mean values are very close to zero, despite small errors due to the numerical approximation. Also, we witness $\mathrm{Var}(e_1)=\mathrm{Var}(\hat{e}_1)=0$ and $\mathrm{Var}(e_k)>\mathrm{Var}(\hat{e}_k)$, for all $k=2,\cdots,5$. This coincides with our theoretical statements in Theorem~\ref{pp:e0} that the open-loop errors have the same mean values as the estimation errors but larger variances. 

\linespread{1.2}
\begin{table}[htbp]
\caption{The mean values and variances of state estimation errors}
\label{tab:ev}
\begin{center}
\begin{tabular}{c|c|c|c|c}
\hline
$k$ & $\E(\hat{e}_k)$ & $\E(e_k)$ & $\mathrm{Var}(\hat{e}_k)$ & $\mathrm{Var}(e_k)$  \\\hline
1 & $0$ & $0$  & $0$ & $0$     \\
2 & $0.0000$ & $-0.0000$  & $1.4549$ & $2.5625$     \\
3 & $-0.0037$ & $-0.0000$ & $1.4497$ & $5.0031$  \\
4 & $0.0220$ & $-0.0000$  & $1.5108$ & $8.7302$      \\
5 & $0.0149$ & $0.0000$ & $1.5288$ & $13.601$      \\
\hline
\end{tabular}
\end{center}
\end{table}
\linespread{1}

\section{Experimental Study} \label{sec:simulation}

In this section, we conduct an experimental study of a leader-follower autonomous driving scenario, as illustrated in Fig.~\ref{fig:case}, to validate our theoretical results interpreted so far.
The leader-follower scenario is a simplified case of the widely-used platooning model in autonomous driving~\cite{liu2020cloud}. A leading vehicle drives along a predefined trajectory and each follower should keep a constant distance from the leader. The positions and velocities of the vehicles are measured using remote sensors, connected to the vehicles with a common communication network which is activated by an event-based scheme as Eq.~\eqref{eq:schedul_variables_policy}.

\begin{figure}[htbp]
\noindent
\hspace*{\fill} 
\begin{tikzpicture}[scale=1,font=\small]

\node[anchor=south] (tower) at (0.4cm,-0.2cm) {\includegraphics[height=2cm]{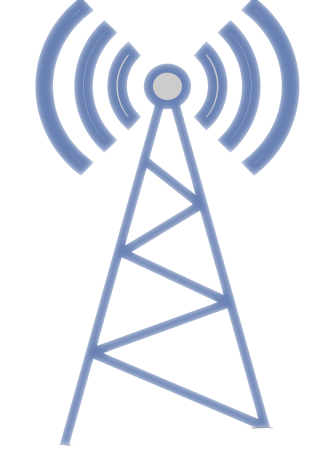}};
\node[anchor=west, text width=3cm, align=center] () at ([xshift=-0.7cm, yshift=0.5cm] tower.east) {\small Communication network};

\node[anchor=south] (pole) at (-2.5cm,-0.2cm) {\includegraphics[height=1.5cm]{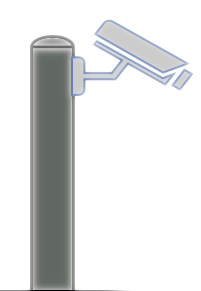}};
\node[anchor=center, text width=3cm, align=center] () at (pole.north) {\small Remote sensor};

\node[] (tractors) at (0cm,0cm) {\includegraphics[height=1cm]{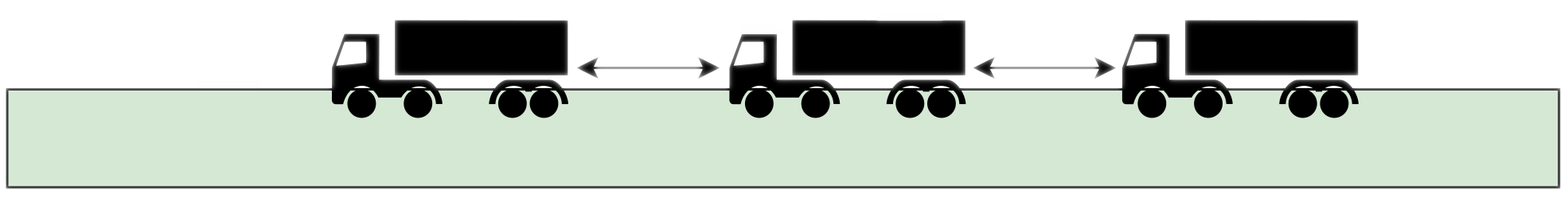}};
\node[] () at ([xshift=-1.5cm] tractors.north) {\small Leader};
\node[] () at ([xshift=1.2cm, yshift=-0.87cm] tractors.north) {\small Followers};
\node[anchor=north, align=center] () at ([xshift=-0.7cm] tractors.north) {\small $d$};
\node[anchor=north, align=center] () at ([xshift=1.3cm] tractors.north) {\small $d$};

\node[anchor=south] (wifi1) at (-0.15cm,0.3cm) {\includegraphics[height=0.3cm]{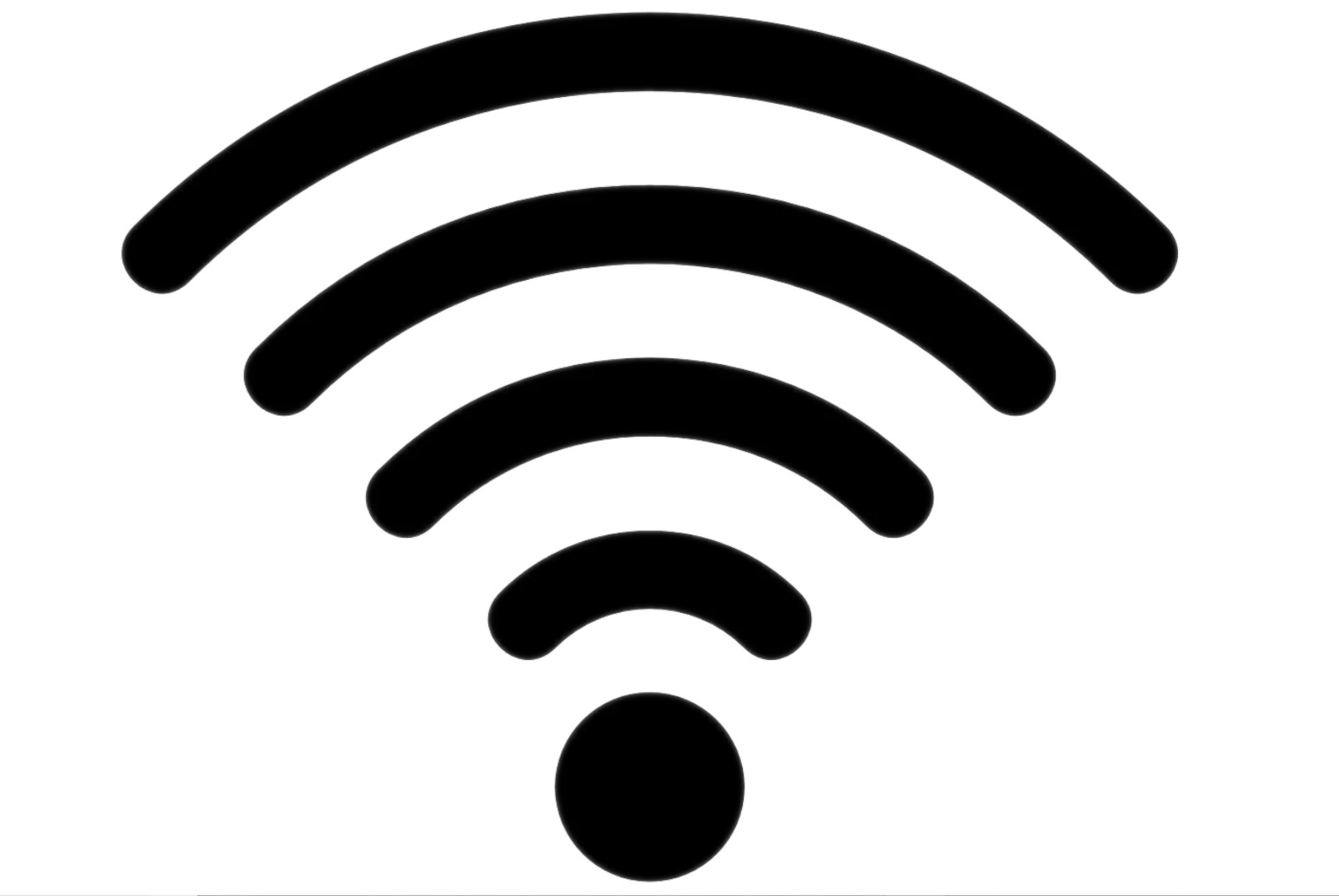}};
\node[anchor=south] (wifi2) at (1.85cm,0.3cm) {\includegraphics[height=0.3cm]{wifii.png}};

\end{tikzpicture}
\hspace{\fill} 
\caption{\rv{A leader-follower autonomous driving scenario.}}
\label{fig:case}
\end{figure}

The position of the leader vehicle follows a predefined trajectory $p^L(t) = -\cos(t) + 1.2t$. The follower is required to maintain a distance $d = 3\,$m with the leader. The kinematic model of the follower vehicle is
\begin{equation}
\begin{split}
\dot{p}(t) \, &= v(t), \quad\;
\dot{v}(t) \, = u(t),
\label{eq:double_integrator}
\end{split}
\end{equation}
where $p(t), v(t), u(t) \in \mathbb{R}$ are respectively the position, the velocity, and the acceleration of the follower. In this experiment, the parameters are selected as $\gamma =1$, $Q = 1$, and $K = 1$. 
The objective of the problem is to design a control law $u(t)$, such that $p(t) \rightarrow p^L(t)+d$ and $v(t)\rightarrow\dot{p}^L(t)$ as time $t$ increases. We define the following feedback control law,
\begin{equation}\label{eq:ctrl}
\begin{split}
u(t) &= - \gamma Q^{-1} v(t) - Q^{-1}Kp(t) + \gamma Q^{-1} \dot{p}^L(t)\\
& ~~~ + Q^{-1}Kp^L(t) + \ddot{p}^L(t) +Q^{-1}Kd,  
\end{split}
\end{equation}
where $K, Q, \gamma \in \mathbb{R}^+$ are positive parameters. It can be verified using a Lyapunov method that $p(t)- p^L(t) = d$ and $v(t) - \dot{p}^L(t) =0$ render a globally asymptotic equilibrium of the closed-loop system, which indicates the achievement of the desired control performance. The proof is omitted in this paper.

In our experiment, we consider the discrete-time version of the follower vehicle in Eq.~\eqref{eq:double_integrator},
\begin{equation*}
\begin{split}
p_{k+1} &= p_k + \Delta{t}v_k,\\
v_{k+1} &= v_k + \Delta{t}u_k + w_k,\\
\end{split}
\end{equation*}
where $\Delta{t}$ is the sampling period, and $w_k$ is an i.i.d. noise process. Accordingly, we discretize the reference trajectory $p^{L}(t)$ to $p^L_k$ using discrete sampling $t = k\cdot \Delta t$. 
In correspondence with the NET-SCS model in Fig.~\ref{fig:block_diagram}, the leader's trajectory $p^L_k$ is the reference signal of the overall system. Each follower is a plant with the state $x_k \!=\! v_k$. The limited communication bandwidth motivates the application of the event-triggered scheduler in \eqref{eq:schedul_variables_policy}, for which we set $\eta \!=\! 1$, and $T\!=\!20$ in this experiment. The state estimator \eqref{eq:est} is used to obtain $\hat{x}_k \!=\! \hat{v}_k$, with $A \!=\!1$, and $B \!=\!\Delta t$. The discrete-time controller based on the estimated state is $u_k = u(k \cdot \Delta t)$, for which $\hat{p}_k$ is obtained using the recursive model $\hat{p}_{k+1} = \hat{p}_k + \Delta{t}\hat{v}_k$.
The simulation runs for $10^4$ trials with the same initial conditions $p_0 = 0$ and $v_0 = 0$. Each trial lasts for $t = 40\,s$ with a sampling time $\Delta t = 0.1s$. The overall control performance is shown in Fig.~\ref{fig:tracking}. It is observed that the average following distance $E(p(t)) - p^L(t)$ slightly fluctuates around $d=3\,$m, which indicates satisfactory distance keeping. Also, the average velocity 
$E(v(t))$ is very close to the reference velocity $\dot{p}^L(t)$. This shows that the configuration of the state estimator~\eqref{eq:est} and the event-triggered scheduler \eqref{eq:schedul_variables_policy} successfully achieves the control objectives.

\begin{figure}[htbp]
\centering
\subfloat[tracking distance]
{\includegraphics[width=4.5cm]{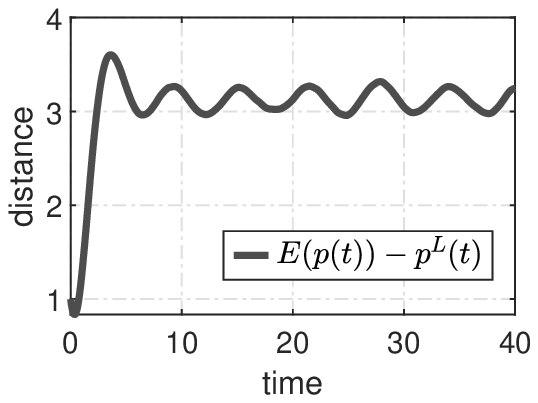}}
\subfloat[velocity tracking]{\includegraphics[width=4.5cm]{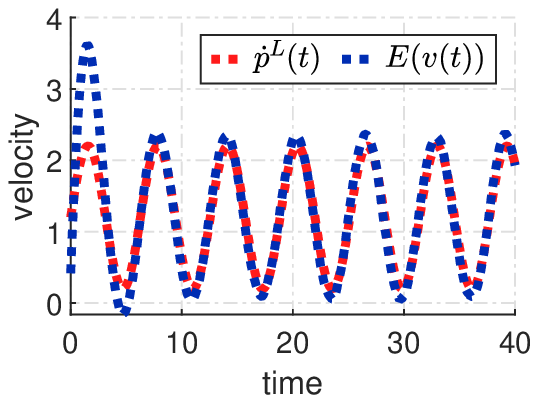}}
\caption{Average performance of the platoon controller~\eqref{eq:ctrl}. Plot~(a) depicts the tracking distance $\E(p(t)) -p^L(t)$. Plot~(b) shows the leading velocity $\dot{p}^L(t)$ (in red) and the mean of the actual velocity $\E(v(t))$ (in blue).}
\label{fig:tracking}
\end{figure}

The computed values of the ACR $\E(\delta_k)$, using Algorithm~\ref{ag:rme} (PNM), with various triggering thresholds $\eta$, are shown in Fig.~\ref{fig:comm_rate_evolution} (in red). To verify the validity of our proposed method, we also show the GT-ACR obtained from Monte-Carlo simulation (in black), and the ACR computed according to the conventional method (CAM), i.e., $\breve{E}(\delta_k)$ (in blue). The information delivered by Fig.~\ref{fig:comm_rate_evolution} can be summarized as follows.

\subsubsection{The general existence of the stationary ACR} 
It is noticed that all ACR values, $\E(\delta_k)$, $\breve{\E}(\delta_k)$, and the GT-ACR, ultimately converge to their respective stationary points for all triggering threshold values $\eta=1,2,3,4$. This validates our result on the existence of the stationary ACR in Sec.~\ref{sec:conv_ana}. 

\subsubsection{The accuracy of the proposed method} 
It is observed that the computed ACR $\E(\delta_k)$ closely follows the GT-ACR at all times, indicating the accuracy of our proposed method. On the contrary, $\breve{\E}(\delta_k)$ shows deviations from the GT-ACR suggesting the inaccuracy of computing the ACR by this method. Also, $\breve{\E}(\delta_k)$ is in general larger than $\E(\delta_k)$ in the steady state, which validates our theoretical statement in Sec.~\ref{sec:comp_gaussian} that this method overestimates the stationary ACR. 

\subsubsection{The influence of the triggering threshold} 
The stationary ACR values tend to be smaller as the triggering threshold $\eta$ increases. The intuition behind this observation is that a higher threshold means higher estimation errors are tolerable, hence fewer events will be triggered to reset the estimation error, which consequently leads to lower ACR. A similar observation is depicted in Fig.~\ref{fig:threshold_vs_rate}, where the change of stationary ACRs $\E(\delta_{\infty})$ and $\breve{\E}(\delta_k)$, and their ratios are plotted versus the changes of the threshold $\eta$. It van be seen that $\E(\delta_{\infty}) < \breve{\E}(\delta_k)$, for all values of $\eta$. However, the scale of the deviation between the two approaches is not monotone with respect to $\eta$, i.e., larger triggering thresholds do not necessarily lead to larger deviations. The largest deviation occurs around $\eta=3$, with more than $25\%$, which is noticeable.

\begin{figure}[htbp]
    \centering
    \begin{subfigure}[b]{0.23\textwidth}
        \includegraphics[width=\textwidth]{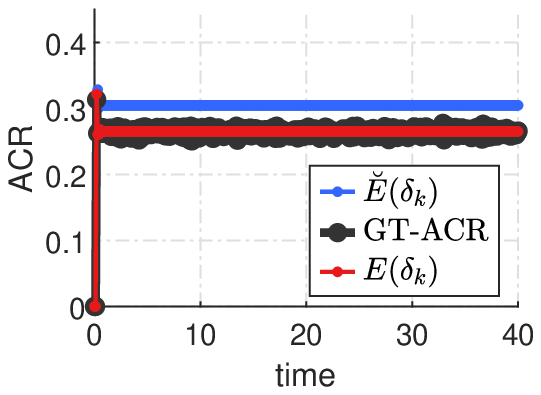}
        \caption{$\eta=1$}
        \label{labeleta.1}
    \end{subfigure}
    \hfill
    \begin{subfigure}[b]{0.23\textwidth}
	\includegraphics[width=\textwidth]{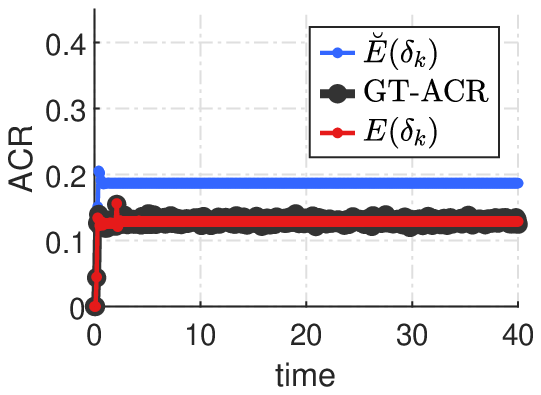}
        \caption{$\eta=2$}
        \label{labeleta.2}
    \end{subfigure}
    \hfill
    \begin{subfigure}[b]{0.23\textwidth}
        \includegraphics[width=\textwidth]{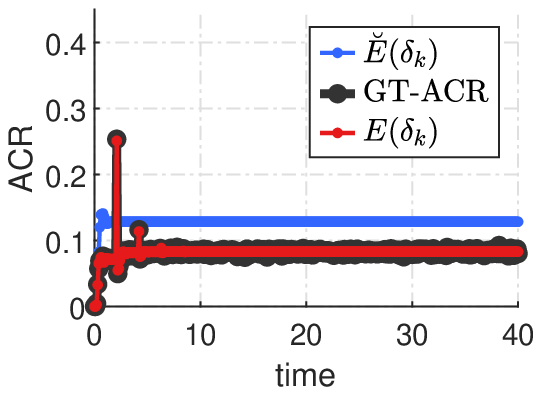}
        \caption{$\eta=3$}
        \label{labeleta.3}
    \end{subfigure}
    \hfill
    \begin{subfigure}[b]{0.23\textwidth}
        \includegraphics[width=\textwidth]{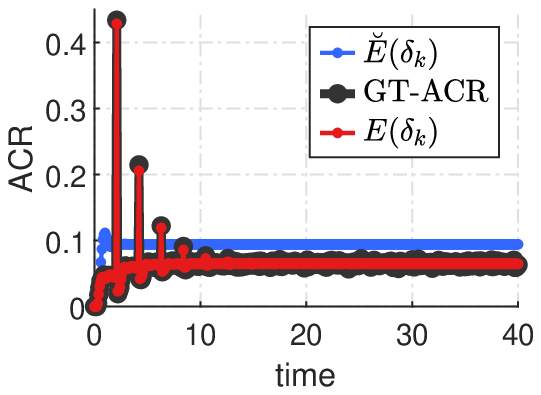}
        \caption{$\eta=4$}
        \label{labeleta.4}
    \end{subfigure}
    \caption{ACRs computed using our proposed method $\E(\delta_k)$ (in red), the existing method $\breve{\E}(\delta_k)$ (in blue), and GT-ACR (in black), for various triggering thresholds $\eta=1,2,3,4$.}
    \label{fig:comm_rate_evolution}
\end{figure}

\begin{figure}[htbp]
\centering
\begin{subfigure}[b]{0.23\textwidth}
    \includegraphics[width=\textwidth]{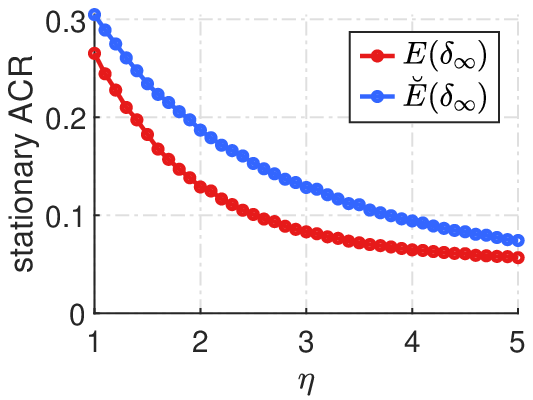}
    \caption{Stationary ACRs}
    \label{label1.1}
\end{subfigure}
\hfill
\begin{subfigure}[b]{0.23\textwidth}
\includegraphics[width=\textwidth]{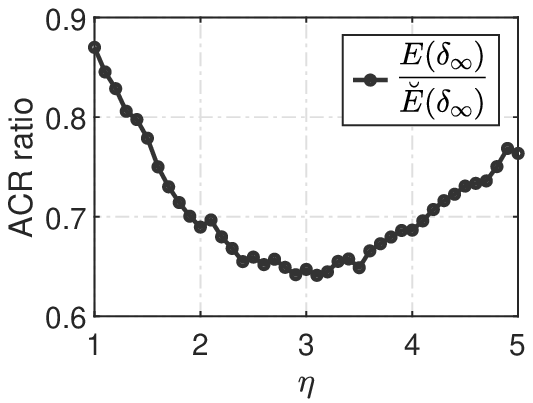}
    \caption{ACR ratio}
    \label{label1.2}
\end{subfigure}
\caption{Comparison between the stationary ACR vs. triggering threshold. Plot (a), our proposed method $\E(\delta_{\infty})$ (in red), and the existing method $\breve{\E}(\delta_{\infty})$ (in blue). Plot (b) shows the ratio of the two stationary ACR vs. triggering thresholds.}
\label{fig:threshold_vs_rate}
\end{figure}

\section{Conclusion} \label{sec:conclusions}

Motivated by the conservativeness of conventional methods, we provide comprehensive analytical formulations to compute the accurate ACR of a NET-SCS. We derive a novel recursive ACR model by precisely tracking the statistical propagation of truncated probability distributions. Based on this model, we propose analytical and numerical approaches to accurately calculate ACR values and showcase the noticeable ACR over-estimation when the triggering-induced truncations are ignored in computing the ACR. 
The efficacy of our proposed method and theory is validated with a numerical example and an experimental study on a platooning scenario, showing that our ACR computation model precisely follows the ground truth case. Future work will focus on using the accurate ACR to facilitate the design of efficient networked systems.

\section*{Acknowledgement}
The authors would like to thank Prof. Biqiang Mu and Prof. Hongsheng Qi from the Chinese Academy of Sciences for their valuable discussions on truncation analysis. The authors would also like to thank Prof. Yirui Cong for helpful discussions on the communication rate analysis.

\section*{Appendix}

\subsection{Existence of Steady State of A Discrete-Time LTI System}

In this paper, we construct a recursive model to depict the timed evolution of the ACR. The recursive ACR model is indeed a discrete-time LTI (dt-LTI) system and the stationary ACR is equivalent to its steady state. This allows us to solve the stationary ACR by investigating the asymptotic stability of a general dt-LTI system, which can be examined by the well-known Jury stability criterion~\cite{jury1964theory}.
Consider a characteristic polynomial with variable $z\in \mathbb{R}$ in the following form,
\begin{equation*}
    D(z) = a_0 + a_1z + a_2z^2 + \ldots + a_Nz^N,
\end{equation*}
where $N \in \mathbb{N}^+$ is the degree of the characteristic polynomial and $a_i \in \mathbb{R}$, $i=1,2,\cdots,N$, are coefficients. The following tests determine whether the system represented by $D(z)$ has any pole outside the unit circle. A system must conform to all the following rules to be considered stable. 

\noindent Rule 1: If $z=1$, $D(z)>0$ must hold.

\noindent Rule 2: If $z=-1$, $z^N D(z) > 0$ must hold.    

\noindent Rule 3: $|a_0|<|a_N|$ must hold.

\noindent If all rules satisfied, we expand the Jury Array as follows.
\begin{center}
\begin{tabular}{ c c c c c c c }
\hline
1)      & $a_0$     & $a_1$    & $a_2$   & $a_3$     & \ldots     & $a_N$  \\ 
2)      & $a_N$     & \ldots   & $a_3$   & $a_2$     & $a_1$      & $a_0$  \\  
3)      & $b_0$     & $b_1$    & $b_2$   & \ldots    & $b_{N-1}$  & ~      \\
4)      & $b_{N-1}$ & \ldots   & $b_2$   & \ldots    & $b_0$      & ~      \\
\vdots  & \vdots    & \vdots   & \vdots  & ~         & ~          & ~      \\    
$2N-3$) & $v_0$     & $v_1$    & $v_2$   & ~         & ~          & ~      \\ 
\hline
\end{tabular}
\end{center}
We stop constructing further arrays once we reach a row with 2 members. Then, we use the following formula to calculate the values of the odd-number rows. 
{
\small
\begin{equation*}
b_k = 
\begin{vmatrix}
a_0 &\!\! a_{N-k} \\
a_N &\!\! a_k     \\
\end{vmatrix},
c_k = 
\begin{vmatrix}
b_0     &\!\! b_{N-1-k} \\
b_{N-1} &\!\! b_k     \\
\end{vmatrix},
d_k = 
\begin{vmatrix}
c_0     &\!\! c_{N-2-k} \\
c_{N-2} &\!\! c_k       \\
\end{vmatrix}.
\end{equation*}
}
The even number rows are equal to the previous row in reverse order. We will use $k$ as an arbitrary subscript value. These formulas are reusable for all elements in the array. This can be carried out to all lower rows of the array if needed.

\noindent Rule 4: Once the Jury array has been formed, all the following relationships must be satisfied until the last row of the array
\begin{equation*}
|b_0| > |b_{N-1}|,~~ |c_0| > |c_{N-2}|, ~~|d_0| > |d_{N-3}|.
\end{equation*}
The system is stable if all these conditions are satisfied.

\subsection{Proof of Theorem~\ref{th:theorem1}}

We can rewrite the recursive model~\eqref{eq:rec} as the following matrix-vector form
\begin{equation}
\bm{\xi}_{k} = \begin{bmatrix}
\bm{0}^{\top}      &    \bm{I}  \\
P_T & \bm{p}        \\
\end{bmatrix} \bm{\xi}_{k-1} + \bm{\beta},~\forall \, k \in \mathbb{N}^+,  
\label{eq:compact_system}
\end{equation}
where $\bm{I}$ is a $(T-1)$-dimensional identity matrix, $\bm{0} \in \mathbb{R}^{T-1}$ is a zero vector, and
\begin{equation*}
\begin{split}
\bm{\xi}_k \,&= \left[\, \E\!\left(\delta_{k+T-1}\right) ~ \ldots ~ \E\!\left(\delta_{k+1}\right) ~ \E\!\left(\delta_k\right)\, \right]^{\top}, \\
\bm{p} \,&= \left[\, P_{T-1} ~\ldots ~P_1 \, \right],~\bm{\beta} = [\,\bm{0}^{\top}  ~ 1\,]^\top,
\end{split}
\end{equation*}
with an initial condition
\begin{equation*}
\bm{\xi}_0 = \left[\, \E\!\left(\delta_{T-1}\right) ~ \ldots ~ \E\!\left(\delta_{1}\right) ~ \E\!\left(\delta_0\right)\, \right]^{\top},
\end{equation*}
Therefore, \eqref{eq:compact_system} can be recognized as a dt-LTI system, where $\bm{\xi}_k$, $k \in \mathbb{N}$, is the system state, $\bm{\beta}$ is the constant input, and $P_n$, $n=1,2,\cdots,T$, are the constant parameters. In this sense, the existence of the stationary ACR $\E\!\left( \delta_{\infty}\right)$ can be determined by the stability of the dt-LTI system using the Jury's criterion.

For any $z \in \mathbb{R}$, the characteristic polynomial~\eqref{eq:compact_system} is
\begin{equation}
D(z) = P_T + P_{T-1}z + \ldots + P_{1}z^{T-1} + z^T.  
\label{eq:poly_system}
\end{equation}
Given the polynomial (\ref{eq:poly_system}), we investigate the state convergence of the dt-LTI system \eqref{eq:compact_system} using the Jury's criterion recalled in Sec.~Appx-A. It is straightforward to verify that Rules 1-3 in Sec.~Appx-A hold for \eqref{eq:poly_system}. We then use the coefficients in \eqref{eq:poly_system} to construct the Jury array. The elements in the first row of the Jury array then become
\begin{equation*}
a_0 = P_T, \;\;a_1 = P_{T-1},\;\;\ldots,\;\;a_T = 1.    
\end{equation*}
Having the elements on row i and row 2 of Jury array, the elements $b_k$ and $b_{k+1}$ in the row 3 and row 4, with $k\in\{0,\ldots,T-1\}$, can be constructed as
\begin{equation*}
b_k = 
\begin{vmatrix}
a_0       & a_{T-k} \\
a_{T}   & a_k       \\
\end{vmatrix}
=a_0a_k - a_{T-k}a_{T},
\end{equation*}
\begin{equation*}
b_{k+1} = 
\begin{vmatrix}
a_0       &  a_{T-k-1}   \\
a_{T}   &  a_{k+1}     \\
\end{vmatrix}
=a_0a_{k+1} - a_{T-k-1}a_{T}.
\end{equation*}
From Lemma \ref{lem:transtion_inequality}, we obtain $0<a_0<a_1<\ldots<a_{T}=1$, which implies $a_0a_k < a_0a_{k+1}<a_{T-k-1}a_{T} < a_{T-k}a_{T}$. This inequality further implies $-1<b_k<b_{k+1}<0$ and $|b_0|>|b_{T-1}|$. These results reveal the relationship among the elements $b_k$. Similarly, we can construct $c_k$ and $c_{k+1}$, as
\begin{equation*}
c_k = 
\begin{vmatrix}
b_0      &  b_{T-1-k} \\
b_{T-1}  &  b_k       \\
\end{vmatrix}
=b_1b_k - b_{T+1-k} b_T,
\end{equation*}
\begin{equation*}
c_{k+1} = 
\begin{vmatrix}
b_0       &  b_{T-2-k}   \\
b_{T-1}   &  b_{k+1}     \\
\end{vmatrix}
=b_0b_{k+1} - b_{T-2-k} b_{T-1}.
\end{equation*}
Similarly, we can readily conclude that $1>c_k>c_{k+1}$, which also implies $|c_1|>|c_T-1|$. Similar analysis can be carried out to show that Rule 4 of Jury stability criteria always holds. Therefore, the characteristic polynomial~\eqref{eq:poly_system} meets Jury's stability criteria, which means that all the eigenvalues of the state transition matrix in~\eqref{eq:compact_system} are less than or equal to 1, i.e. the system presented in~\eqref{eq:compact_system} is asymptotically stable. This indicates that the limit $\E({\delta}_{\infty}) = \lim_{k \rightarrow \infty} \E(\delta_{k})$ exists. By taking the limit of both sides of~\eqref{eq:rec}, we obtain
\begin{equation*}
\textstyle \lim_{k \rightarrow \infty} \E\left(\delta_k \right) = 
1 - \sum^T_{n=1} P_{n} \lim_{k \rightarrow \infty} \E\left(\delta_{k-n} \right),
\end{equation*}
which leads to \eqref{eq:stationary_rate} and proves this Theorem~\ref{th:theorem1}. 

\subsection{Truncated Stochastic Variables}

As mentioned in Sec.~\ref{sec:tacr}, the side information $|e_{k-1}|< \eta$ in \eqref{eq:close} changes the support of $p_{\hat{e}_k}(\cdot)$ at every sampling instant $k$. With a constant threshold $\eta \in \mathbb{R}^+$, the change is specifically a symmetric truncation operation to the PDF $p_{\hat{e}_k}(\cdot)$.
Consider a scalar stochastic variable $\zeta \in \mathbb{R}$ with an infinite-support PDF $p_{\zeta}(\cdot)$. We use $\zeta^{[a,b]}$ to represent the truncated stochastic variable derived from $\zeta$ by trimming its support set with a fixed interval $\zeta \in [\,a,b\,]$, $a,b\in \mathbb{R}$. Due to this truncation operation, the derived variable $\zeta^{[a,b]}$ has different stochastic properties compared to its original $\zeta$. Specifically, its PDF reads
\begin{equation}
    p_{\zeta^{[a,b]}}\!\left(z\right) = \left\{ \begin{array}{ll}
    \rho_{\zeta}(a,b) p_{\zeta}(z), & a \leq z \leq b, \\
    0, & \mathrm{otherwise},
    \label{eq:truncation_PDF}
    \end{array} \right.
\end{equation}
where $z \in \mathbb{R}$ is an auxiliary variable, $p_{\zeta^{[a,b]}}(\cdot)$ denotes the PDF of $\zeta^{[a,b]}$ subject to a truncation interval $[\,a,b\,]$, and $\rho_{\zeta}(a,b)$ is a scalar calculated as $\rho_{\zeta}(a,b)= 1/(F_\zeta(b) - F_\zeta(a))$,
where $F_\zeta(\cdot)$ is the cumulative distribution function (CDF) of $\zeta$. 
Also, the expected value and the variance of $\zeta^{[a,b]}$ are
\begin{equation*}
\textstyle \E\!\left(\zeta^{[a,b]}\right) = \int^{b}_a z p_{\zeta^{[a,b]}}(z) \mathrm{d}z = \rho_{\zeta}(a,b) \int^{b}_a z p_{\zeta}(z) \mathrm{d}z,
\end{equation*}
\begin{equation*}
\begin{split}
\textstyle \mathrm{Var}\!\left(\zeta^{[a,b]}\right) =&\, \textstyle \int^{b}_a z^2 p_{\zeta^{[a,b]}}(z) \mathrm{d}z - \E^2\!\left(\zeta^{[a,b]}\right) \\
=&\, \textstyle  \rho_{\zeta}(a,b) \int^{b}_a z^2 p_{\zeta}(z) \mathrm{d}z - \E^2\!\left(\zeta^{[a,b]}\right).
\end{split}
\end{equation*}
Note that the difference between the mean values and the variances of a stochastic variable $\zeta$ and its truncated counterpart $\zeta^{[a,b]}$ is reflected not only by the additional multiplier $\rho_{\zeta}(a,b)$ but also by the changed upper and lower limits of the integrals, $a$ and $b$. If $\zeta$ is a Gaussian variable, $\zeta^{[a,b]}$ is not necessarily Gaussian. This means that all the properties that are proposed for Gaussian variables, such as the linear combination properties, may not hold for their truncated variables. Ignoring this effect may lead to the inaccurate characterization of the truncated stochastic variable. 

\subsection{Proof of Proposition~\ref{pp:trunc}}

If $\E(\zeta)=0$ and $p_{\zeta}(z) = p_{\zeta}(-z)$ hold, according to the definition of the PDF of truncated stochastic variables in~\eqref{eq:truncation_PDF}, we have
\begin{equation*}
p_{\zeta^{\eta}}(z) = \frac{p_{\zeta}(z)}{F_{\zeta}(\eta) - F_{\zeta}(-\eta)} = \frac{p_{\zeta}(-z)}{F_{\zeta}(\eta) - F_{\zeta}(-\eta)} = p_{{\zeta}^{\eta}}(-z).
\end{equation*}
Utilizing this property, we further have
\begin{equation*}
\E(\zeta^{\eta}) = \int_{-\eta}^{\eta} z p_{{\zeta}^{\eta}}(z) \mathrm{d} z = 0.
\end{equation*}
Therefore, condition 1) of Proposition~\ref{pp:trunc} is proved. Furthermore, the variance of $\zeta^{\eta}$ reads
\begin{equation*}
\begin{split}
\mathrm{Var}\!\left({\zeta}^{\eta} \right) = & \, \textstyle  \int_{-\eta}^{\eta} z^2p_{\zeta^{\eta}}(z)\mathrm{d} z - \E^2\!\left(\zeta^{\eta} \right)\\
= & \, \textstyle  \int_{-\eta}^{\eta}z^2p_{\zeta}(z)\mathrm{d}z \left/ {\int_{-\eta}^{\eta}p_{\zeta}(z)\mathrm{d}z} \right..
\end{split}
\end{equation*}
Note that $\mathrm{Var}\!\left( \zeta^{\eta} \right)$ is indeed a function of the truncation interval $\eta$. Thus, we represent it as $\mathrm{Var}(\eta)$. It can be verified that $\mathrm{Var}(\eta)$ is continuous and continuously differential for $\eta$. Moreover, we know 
\begin{equation*}
\lim_{\eta\rightarrow\infty}\mathrm{Var}(\eta) = \int_{-\infty}^{\infty}z^2p_{\zeta}(z)\mathrm{d} z = \mathrm{Var}(\zeta),~\lim_{\eta\rightarrow 0}\mathrm{Var}(\eta) = 0.
\end{equation*}
By taking the derivative of $\mathrm{Var}\!\left( \eta \right)$ to $\eta$, we obtain
\begin{equation*}
\begin{split}
\mathrm{Var}'(\eta)\, &  = \textstyle \left[\left( \int_{-\eta}^{\eta}z^2p_{\zeta}(z)\mathrm{d}z \right)'\int_{-\eta}^{\eta}p_{\zeta}(z)\mathrm{d}z \right. \\
&- \textstyle \left. \left( \int_{-\eta}^{\eta}p_{\zeta}(z)\mathrm{d}z \right)'\int_{-\eta}^{\eta}z^2p_{\zeta}(z)\mathrm{d}z \right] \!\left/\! {\left( \int_{-\eta}^{\eta}p_{\zeta}(z)\mathrm{d}z \right)^2} \right.\!.
\end{split}
\end{equation*}
Note that
\begin{equation*}
\textstyle \left( \int_{-\eta}^{\eta}z^2p_{\zeta}(z)\mathrm{d}z \right)' = \eta^2 \!\left[ p_{\zeta}(\eta)+p_{\zeta}(-\eta) \right] = 2 \eta^2 p_{\zeta}(\eta),
\end{equation*}
\begin{equation*}
\textstyle \left( \int_{-\eta}^{\eta}p_{\zeta}(z)\mathrm{d}z \right)' = p_{\zeta}(\eta)+p_{\zeta}(-\eta)  = 2 p_{\zeta}(\eta).
\end{equation*}
Thus,
\begin{equation*}
\mathrm{Var}'(\eta) = \left.{2p_{\zeta}(\eta)} \int_{-\eta}^{\eta} \left( \eta^2 - \zeta^2\right) p_{\zeta}(z)\mathrm{d} z \right/ {\int_{-\eta}^{\eta}p_{\zeta}(z)\mathrm{d} z}.
\end{equation*}
Since $p_{\zeta}(\cdot)$ is a non-negative, we conclude $V'(\eta) >0$, for all $\eta > 0$. This implies that $V(\eta)$ is a monotonically increasing function in the interval $\eta \in (0,\infty)$. Therefore, we can write $\mathrm{Var}(\zeta^{\eta}) = \mathrm{Var}(\eta) < \mathrm{Var}(\infty) = \mathrm{Var}(\zeta)$, for any $0<\eta<\infty$. Thus, condition 2) of Proposition~\ref{pp:trunc} is proved.

\subsection{Proof of Theorem~\ref{pp:e0}}

We prove Theorem~\ref{pp:e0} recursively. 
We first consider the case $k=1$. Since $\hat{e}_{1},e_1 \sim \mathcal{N}(0,\sigma)$, we have $p_{\hat{e}_1}(z) = p_{\hat{e}_1}(-z)$, for all $z \in \mathbb{R}$, $\E\!\left( \hat{e}_{1} \right) = \E(e_1) = 0$, and $\mathrm{Var}(\hat{e}_1) = \mathrm{Var}(e_1) = \sigma^2$. Then, for a truncated stochastic variable $\hat{e}_i^{\eta}$ with threshold $\eta > 0$, according to Proposition \ref{pp:trunc}, given any $i =1,2,\cdots,T-1$, such that $p_{\hat{e}_i}(z) = p_{\hat{e}_i}(-z)$, we have
$p_{\hat{e}_i^{\eta}}(z) = p_{\hat{e}_i^{\eta}}(-z)$, $\E(\hat{e}_i^{\eta}) =0$, and $\mathrm{Var}(\hat{e}_i^{\eta}) < \mathrm{Var}(\hat{e}_i)$.

According to~(\ref{eq:close}), we know $\hat{e}_{i+1} = A \hat{e}_{i}^{\eta} + w_i$, from which we conclude
\begin{equation*}
\textstyle p_{\hat{e}_{i+1}}\left( z \right) = \int_{-\eta}^{\eta} \;p_{\hat{e}_i^{\eta}} \left( \xi \right) p_w(z - A \xi) \mathrm{d}\xi.
\end{equation*}
Considering that $p_{\hat{e}_1^{\eta}}(\cdot)$ is an even PDF, and $p_w(z) = p_w(-z)$, for all $z \!\in \mathbb{R}$, we obtain
\begin{equation*}
\textstyle p_{\hat{e}_{i+1}}\left(z \right) = \int_{\xi=-\eta}^{\xi=\eta} \;p_{\hat{e}_i^{\eta}} \left(- \xi \right) p_w(-z + A \xi) \mathrm{d}\xi.
\end{equation*}
Set $\hat{z} = - \xi$, then we will have
\begin{equation*}
\begin{split}
p_{\hat{e}_{i+1}}\left(z \right)
&= \textstyle  \int_{-\hat{z}=-\eta}^{-\hat{z}=\eta} \;p_{\hat{e}_i^{\eta}} \left(\hat{z} \right) p_w(-z - A \hat{z}) \mathrm{d}(-\hat{z}) \\
&\textstyle  = -\int_{-\hat{z}=-\eta}^{-\hat{z}=\eta} \;p_{\hat{e}_i^{\eta}} \left(\hat{z} \right) p_w(-z - A \hat{z}) \mathrm{d}\hat{z} \\
&\textstyle  = \int_{\hat{z}=-\eta}^{\hat{z}=\eta} \;p_{\hat{e}_i^{\eta}} \left(\hat{z} \right) p_w(-z - A \hat{z}) \mathrm{d}\hat{z} 
=p_{\hat{e}_{i+1}}\!\left(-z \right).
\end{split}
\end{equation*}
This property leads to 
\begin{equation*}
\E(\hat{e}_{i+1}) = \int_{-\eta}^{\eta} z p_{\hat{e}_{i+1}}\!\left(z \right) \mathrm{d}z = 0.
\end{equation*}
Also, from Assumption \ref{as:as1}, we know that $e_i^{\eta}$ and $e_i$ are independent from $w_{i-1}$. Thus we conclude  
\begin{equation*}
\begin{split}
\mathrm{Var}(\hat{e}_{i+1}) &= A^2 \mathrm{Var}(\hat{e}^{\eta}_{i}) + \sigma^2,    \\
\mathrm{Var}(e_{i+1}) &= A^2\mathrm{Var}(e_i) + \sigma^2. \\
\end{split}
\end{equation*}
According to Proposition \ref{pp:trunc}, we have $\mathrm{Var}(\hat{e}_i^{\eta}) < \mathrm{Var}(\hat{e}_i)$. Therefore, for any $i$ such that $\mathrm{Var}(\hat{e}_i) \leq \mathrm{Var}(e_i)$, we have 
\begin{equation*}
\begin{split}
\mathrm{Var}(\hat{e}_{i+1}) &= A^2 \mathrm{Var}(\hat{e}_i^{\eta}) + \sigma^2 < A^2 \mathrm{Var}(\hat{e}_i) + \sigma^2 \\
&\leq A^2 \mathrm{Var}(e_i) + \sigma^2 = \mathrm{Var}(e_{i+1}).
\end{split}
\end{equation*}
Finally, we proved $p_{\hat{e}_i}(z) = p_{\hat{e}_i}(-z)$, $\E\!\left(\hat{e}_i \right) = 0$, and $\mathrm{Var}(\hat{e}_i) \leq \mathrm{Var}(e_i)$ hold for all $i=1,2,\cdots,T$. Note that $\mathrm{Var}(\hat{e}_i) = \mathrm{Var}(e_i)$ only when $i=1$.

\subsection{An Example of Computing ACR Analytically}

The computation procedure of $\E(\delta_k)$ for $k=1,2,3$ using the analytical method proposed in Sec.~\ref{sec:crc} is as follows. 

$\bullet$ For $k=1$, according to \eqref{eq:pdf_0_1}, we have 
\begin{equation*}
\textstyle \overline{P}_1 = \int_{-\eta}^{\eta}\delta(z)\mathrm{d}z = 1,~\mathrm{and}~\E(\delta_1) = 1- \overline{P}_1 = 0.
\end{equation*}

$\bullet$ For $k=2$, using~\eqref{eq:con2}, we can calculate
\begin{equation*}
\textstyle \overline{P}_2 = \int_{-\eta}^{\eta} p_{\hat{e}_1}\!\left( z \right) \mathrm{d} z = 0.6827,
\end{equation*}
where the analytical form of $p_{\hat{e}_1}(\cdot)$ is provided in~\eqref{eq:pdf_0_1}. Then, according to \eqref{eq:expcon}, we have
\begin{equation*}
\E(\delta_2) = 1 - \overline{P}_1 \E(\delta_1) - \overline{P}_2 \overline{P}_1 \E(\delta_0) = 0.3173.
\end{equation*}

$\bullet$ For $k=3$, we have $\hat{e}_2 = A \hat{e}_1^{\eta} + w_1$ according to \eqref{eq:close}. Thus, the analytical form of $p_{\hat{e}_2}(\cdot)$ reads
\begin{equation}\label{eq:p2z2}
\textstyle p_{\hat{e}_2}(z) = \int^{\eta}_{-\eta} p_{\hat{e}_1^{\eta}}(\xi)\;p_w(z-A\xi) \mathrm{d} \xi.
\end{equation}
Note that $p_{\hat{e}_1^{\eta}}(\cdot)$ is a truncated Gaussian PDF and $p_{w}(\cdot)$ is a Gaussian PDF, which makes the calculation of $p_{\hat{e}_2}(\cdot)$ nontrivial. According to the formulations in Appx.-C, we have
\begin{equation}\label{eq:erf}
p_{\hat{e}_1^{\eta}}\!\left(z\right) =  \frac{1}{\sqrt{2\pi}\sigma \;\mathrm{erf}\left(\frac{\eta}{\sqrt{2\sigma^2}} \right)} \;\mathrm{exp}\!\left(-\frac{z^2}{2 \sigma^2}\right),
\end{equation}
where $\mathrm{erf}(x)=\frac{2}{\sqrt{\pi}} \int^x_0 \mathrm{exp}(-x^2) \mathrm{d}x$ is the Gaussian error function. Substituting \eqref{eq:erf} to the integral term in (\ref{eq:p2z2}), we get
\begin{align}\nonumber
p_{\hat{e}_2}(z) =&\, \int^{\eta}_{-\eta} \frac{\mathrm{exp} \left( -\frac{\left(z-A \xi\right)^2+ {\xi^2}} {2 \sigma^2 }\right)}{2\pi \sigma^2 \mathrm{erf}\!\left(\frac{\eta}{\sqrt{2}\sigma} \right)}  \mathrm{d}\xi \\\nonumber
=& \frac{\mathrm{exp} \left( -\frac{z^2}{2 \sigma^2 \left(A^2+1\right)} \right) }{2\pi \sigma^2 \mathrm{erf}\!\left(\frac{\eta}{\sqrt{2}\sigma} \right)}\! \int^{\eta}_{-\eta}\! \mathrm{exp}\! \left( -\frac{\left(\xi-\frac{A z}{A^2+1}\right)^2}{2 \bar{\sigma}^2}  \right)\! \mathrm{d}\xi 
\\=& \frac{\mathrm{exp}\left(  -\frac{z^2}{2 \sigma^2 \left(A^2 + 1\right)}\right) }{2\sigma \sqrt{2\pi\!\left(A^2 + 1 \right)} \mathrm{erf}\!\left(\frac{\eta}{\sqrt{2}\sigma} \right)} \label{eq:truncated_error_distribution}\\\nonumber
\times& \left\{\mathrm{erf}\!\left( \frac{\eta A^2+\eta  - A z}{\sqrt{2} \sigma \sqrt{A^2 + 1}} \right)\right.
\!+\!\left. \mathrm{erf}\!\left( \frac{\eta A^2+\eta  + A z}{\sqrt{2} \sigma \sqrt{A^2 + 1}} \right) \right\}.
\end{align}
Therefore, we calculate
\begin{equation*}
\overline{P}_3 = 1 -\int_{-\eta}^{\eta} p_{\hat{e}_2}(z) \mathrm{d}z = 0.5872.
\end{equation*}
According to \eqref{eq:expcon}, we have
\begin{equation*}
\E (\delta_3) = \, 1 -\overline{P}_1 \E (\delta_2) - \overline{P}_2 \overline{P}_1 \E (\delta_1) - \overline{P}_3 \overline{P}_2 \overline{P}_1 \E (\delta_0) =0.2818.
\end{equation*}

It has been noticed that the analytical form of $p_{\hat{e}_2}(\cdot)$ in \eqref{eq:truncated_error_distribution} becomes very complicated. The computation of $\E(\delta_k)$ for $k>3$ is even more difficult due to the complicated form of $p_{\hat{e}_{k-1}}(\cdot)$. Therefore, we only provide the results for $k \leq 3$.

\bibliographystyle{ieeetr}
\bibliography{reference}

\balance
\end{document}